\tikzset{>=latex}
\DeclareMathAlphabet{\pazocal}{OMS}{zplm}{m}{n}
\newcommand{\eq}{\begin{equation}}
\newcommand{\qe}{\end{equation}}
\newcommand{\eqs}{\begin{equation*}}
\newcommand{\qes}{\end{equation*}}
\newtheorem{example}{Example}
\newcommand{\posint}{\mathbf{N}}
\newcommand{\natnum}{\mathbf{N}_0}
\newcommand{\real}{\mathbf{R}}
\newcommand{\ceil}[1]{\left\lceil#1\right\rceil}
\newcommand{\eqdist}{\stackrel{d}{=}}
\newcommandx{\gr}[1][1=]{\mathcal{G}%
	\ifthenelse{\isempty{#1}}{}{(#1)}%
}
\newcommand{\nodes}{\mathcal{N}}
\newcommandx{\arcs}[1][1=]{\mathcal{A}%
	\ifthenelse{\isempty{#1}}{}{(#1)}%
}
\newcommandx{\neigh}[2][1=k,2=i]{\mathcal{N}_{#2}%
	\ifthenelse{\isempty{#1}}{}{(#1)}%
}
\newcommandx{\neighinf}[2][1=k,2=i]{\bar{\mathcal{N}}_{#2}%
	\ifthenelse{\isempty{#1}}{}{(#1)}%
}
\newcommand{\tdisc}[1][k]{#1\in\natnum}
\newcommand{\aginset}[1][i]{#1\in\nodes}
\newcommandx{\coeff}[3][1=i,2=j,3=k]{\xi_{#1 #2}
	\ifthenelse{\isempty{#3}}{}{{(#3)}}%
}
\newcommand{\coeffmatrix}{\Xi}
\newcommand{\Hea}[1][i]{H_i}
\newcommandx{\noise}[2][1=i,2=k]{\eta_{#1}(#2)}
\newcommandx{\extinf}[2][1=i,2=k]{u_{#1}(#2)}
\newcommand{\probab}[1]{\mathrm{P}\left(#1\right)}
\newcommandx{\x}[2][1=i,2=k]{x_{#1}%
	\ifthenelse{\isempty{#2}}{}{(#2)}%
}
\newcommandx{\s}[2][1=i,2=k]{s_{#1}%
	\ifthenelse{\isempty{#2}}{}{(#2)}%
}
\newcommandx{\xvec}[1][1=k]{\mathbf{x}%
	\ifthenelse{\isempty{#1}}{}{(#1)}%
}
\newcommandx{\svec}[1][1=k]{\mathbf{s}%
	\ifthenelse{\isempty{#1}}{}{(#1)}%
}
\newcommandx{\y}[2][1=i,2=k]{y_{#1}%
	\ifthenelse{\isempty{#2}}{}{(#2)}%
}
\newcommandx{\C}[2][1=i,2=k]{\chi_{#1}%
	\ifthenelse{\isempty{#2}}{}{(#2)}%
}
\newcommandx{\ki}[1][1=i]{k_{#1}
}
\newcommandx{\pr}[2][1=i,2=k]{p_{#1}(#2)}
\newcommand{\exnodesnotlab}{\nodes^{\mathrm{ext}}}
\newcommand{\exnodes}{\tilde{\nodes}}
\newcommandx{\exi}[1][1=i]{-{#1}}
\newcommandx{\tforinf}[2][1=i,2=j]{\tau_{{#1}{#2}}}
\newcommandx{\ttilde}[2][1=i,2=j]{\tilde{\tau}_{{#1}{#2}}}
\newcommandx{\estforinf}[2][1=i,2=j]{\hat{\tau}_{{#1}{#2}}}
\newcommandx{\inftime}[1][1=i]{\ki[{#1}]}
\newcommandx{\estinftime}[1][1=i]{\hat{t}_{#1}}
\newcommandx{\tforrec}[1][1=i]{r_{#1}}
\newcommandx{\tforex}[1][1=i]{k_{#1}^{\mathrm{ext}}}
\newcommandx{\exinput}[2][1=i,2=k]{x_{\exi[#1]}(#2)}
\newcommandx{\rectime}[1][1=i]{R_{#1}}
\newcommandx{\numinf}[1][1=k]{\lambda%
    \ifthenelse{\isempty{#1}}{}{(#1)}
}
\newcommandx{\thres}[1][1=i]{\phi_{#1}}
\newcommand{\incircbin}[1]{%
  \mathbin{%
    \mathchoice%
    {\protect\incircint{\displaystyle}{#1}}%
    {\protect\incircint{\textstyle}{#1}}%
    {\protect\incircint{\scriptstyle}{#1}}%
    {\protect\incircint{\scriptscriptstyle}{#1}}%
  }%
}
\newcommand{\incircint}[2]{%
  \ooalign{$#1\bigcirc$\crcr\hidewidth$#1#2$\hidewidth\crcr}%
}
\newcommand{\circleland}{\incircbin{\land}}
\newcommandx{\xbo}[2][1=k, 2=]{\mathbf{x}^b%
	\ifthenelse{\isempty{#2}}{}{_{#2}}%
	(#1)
}
\newcommandx{\ybo}[2][1=k, 2=]{\mathbf{y}^b%
	\ifthenelse{\isempty{#2}}{}{_{#2}}%
	(#1)
}
\newtheorem{assumption}{Assumption}
\newtheorem{lemma}{Lemma}
\newtheorem{definition}{Definition}
\newtheorem{proposition}{Proposition}
\newtheorem{corollary}{Corollary}
\newtheorem{theorem}{Theorem}
\newtheorem{remark}{Remark}
\title{Low Complexity Method for Simulation of Epidemics\\Based on Dijkstra's Algorithm} 
\author{Davide Zorzenon, Fabio Molinari, J\"org Raisch
	\thanks{D. Zorzenon, F. Molinari, and J. Raisch
	are with
	the Control Systems Group within the Technische Universit\"at Berlin, Germany.
	\emph{\{zorzenon,molinari,raisch\}@control.tu-berlin.de}.}
}
\begin{document}
	\maketitle
	\thispagestyle{empty}
	\pagestyle{empty}	
	
	\begin{abstract}   
Models of epidemics over networks have become
popular, as they describe the impact of
individual
behavior on infection spread.
However, they come with high computational complexity,
which constitutes a problem in case large-scale scenarios
are considered.
This paper presents a 
discrete-time multi-agent SIR (Susceptible, Infected, Recovered)
model that extends known results in literature.
Based on that, using the novel notion of \textit{Contagion Graph}, it proposes a graph-based method
derived from Dijkstra's algorithm
that allows to decrease the computational complexity of a simulation. 
The Contagion Graph can be also employed as an approximation scheme
describing the ``mean behavior'' of an epidemic
over a network and requiring low computational
power. Theoretical findings are confirmed by
randomized large-scale simulation.
%

	\end{abstract}
	
	\section{Introduction}

Mathematical models of epidemics are essential tools for forecasting the spread of diseases.
Recently, the \mbox{COVID-19} pandemic has shown the importance of such instruments for planning control measures and assessing their impact~\cite{giordano2020sidarthe,bouchnita2020hybrid}.
Among the frameworks available in literature, network-based models are particularly suitable for evaluating non-pharmaceutical interventions, e.g., social distancing, since they provide a natural representation of contact interactions between individuals (see, e.g., Fig.~\ref{fig:network})~\cite{teweldemedhin2004agent,youssef2013mitigation}.
However, computational complexity is an issue when large-scale scenarios are considered~\cite{duan2015mathematical}.
Consequently, researchers have suggested several strategies aiming to reduce the computational burden while maintaining adequate levels of accuracy.

For instance, \cite{koher2016infections} introduces a dynamical discrete-time Susceptible, Infected, Recovered (SIR) model with Boolean algebra formalism, in which operations can be efficiently implemented.
SIR models, in which individuals are assumed to get immune after recovering from the disease, have been widely used to describe infectious diseases~\cite{calafiore2020modified,osthus2017forecasting}.
However, the limitation of discrete-time simulations comes from the synchronous updating, in which the state of the system is updated at regular time intervals.
Since the time step has to be small enough to capture high frequency phenomena, the state is often refreshed even when no new events occur, causing a waste of computational resources.
Event-based methods, as the Gillespie algorithm, overcome this problem by updating the state only when a new event occurs~\cite{vestergaard2015temporal,ahmad2019continuous}.

In our paper, we first present a discrete-time SIR model of the disease evolution.
Our model is a dynamical and stochastic generalization of the Boolean model described in~\cite{koher2016infections}, 
which also captures the possibility that individuals can be infected from outside the investigated network at an arbitrary time.
Next, we introduce the so-called \emph{Contagion Graph}, a graph obtained from the interaction network, from which we formally derive an event-driven procedure to simulate the model.
We analytically show that the procedure, based on Dijkstra's algorithm, significantly reduces the computational
complexity, thus simulation time.
\begin{figure}[t]
	\centering
	\includegraphics[width=\columnwidth]{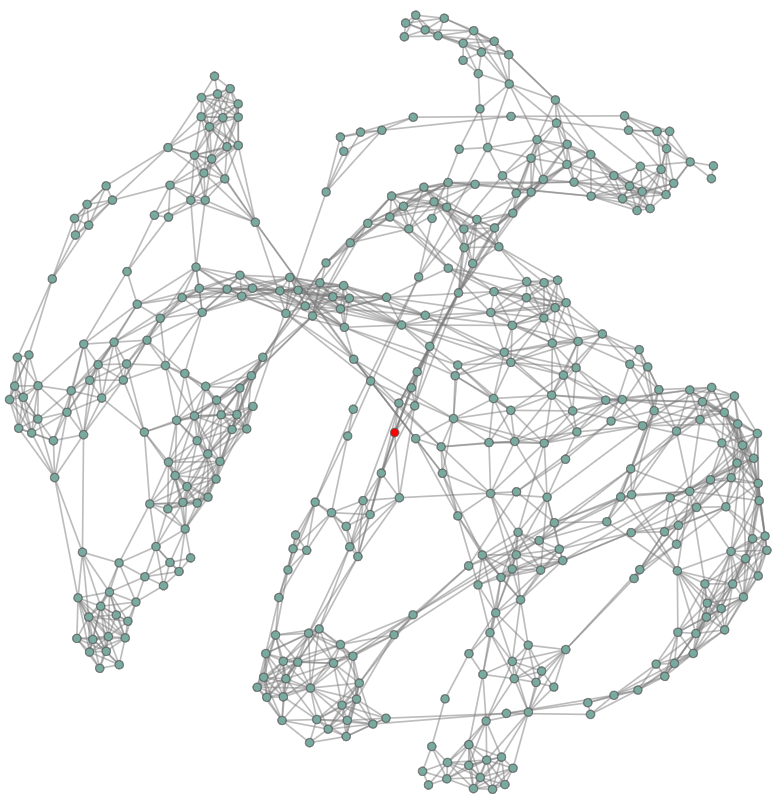}
	\caption{Network with node~$1$ in red.}
	\vspace{-20px}
	\label{fig:network}
\end{figure}

Another open issue in modeling epidemics over networks is the development of approximation schemes, whose aim is to describe the ``mean behavior'' of the disease evolution~\cite{pellis2015eight}.
One of their advantages is the ability to provide insights of the stochastic process without the need of interpolating large amounts of simulation results~\cite{rocha2016individual}.
We will show that the flexible nature of the Contagion Graph allows to formulate an approximate model based on statistical evidence.
\newline
Both applications of the Contagion Graph are shown by employing randomized simulations.

The remainder of this paper is structured as follows.
In Section~\ref{se:problem}, the dynamical system modeling the epidemic spread is presented.
We compare our model to the one of~\cite{koher2016infections} in Section~\ref{se:boolean}.
Section~\ref{se:contagion} presents the Contagion Graph, which allows for both complexity reduction during simulation and ``mean behavior'' analysis.
Concluding remarks are given in Section~\ref{se:conclusion}.


	\subsection{Notation}
Throughout this paper,
$\natnum$ denotes the set of nonnegative integers,
$\posint$ the set of positive integers,
and $\real$ the set of real numbers.
A graph is a pair $(\nodes,\arcs)$ where
$\nodes$ is the set of nodes and $\arcs$ is the set of arcs.
If the graph is undirected, $\arcs$ is the set of 
all two-element subsets $\{i,j\}$ of $\nodes$, so that there is an arc between node
$i$ and node $j$.
If the graph is directed,
$\arcs$ is the set of all pairs $(i,j)$, so that there is a
directed arc from node
$i$ to node $j$.

	\section{Problem Description} \label{se:problem}
\subsection{Multi-agent SIR model}
Consider $\nodes$, a set of
agents
labeled $1$ through $n\in\posint$
describing the whole population
in which an infection is spreading.
Each agent (also referred to as individual) 
can interact with other agents at discrete-time steps
(or iterations).
Assume the underlying network topology
modeling such interactions
to be an undirected graph, i.e., 
at every iteration $\tdisc$,
$\gr[k]:=(\nodes, \arcs[k])$. In particular,
the set of neighbors of agent $i\in\nodes$
at iteration $\tdisc$
is 
$$
	\neigh[k]:=\{j\in\nodes \mid \{j,i\}\in\arcs[k] \},
$$
and represents the set of all agents
that can interact with agent $i$
at iteration $k$.
Note that, by definition, self-arcs are not considered.
The infection can spread from one
individual (agent) to one or more of its neighbors in the graph $\gr[k]$,
and this way through the whole population (set).

To this end, let $\x[i][]:\natnum\mapsto\{0,1\}$ 
describe
if an agent, say $\aginset$,
is found to be \textit{infected}
at iteration $\tdisc$.
In particular,
$$
	\x=1
$$
means that agent $i$ is 
infected at iteration $k$,
otherwise $\x=0$.
In what follows, let
$\neighinf[k]\subseteq\neigh[k]$ be defined as
the subset of neighbors of
agent~$i$ at iteration~$k$
that are infected. Formally,
$$
	\neighinf[k]:=
	\{
		\aginset[j]
		\mid
		\{j,i\}\in\arcs[k],~
		\x[j]=1
	\}.
$$
If $\x=0$,
agent $i$ could be either \textit{susceptible}
(namely, it has never got in contact with the infection)
or \textit{recovered}
(namely, it has already got in contact with the infection, towards which 
it has developed immunity).
Let variable $\y$ capture whether
an individual~$i$ is recovered at iteration $k+1$,
i.e.,
$$
	\y=1
$$
means that agent $\aginset$
is recovered at iteration $k+1$,
otherwise $\y=0$, namely $i$ is \textit{susceptible} or \textit{infected}
at iteration $k+1$.
This clearly implies, $\forall\aginset$,
$\forall\tdisc$,
\eq\label{eq:yimpliesx}
	\y=1\implies\x[i][k+1]=0.
\qe
It is also assumed that immunity lasts forever,
i.e., $\forall\tdisc$,
\eq\label{eq:yincrease}
	\y[i][k+1]\geq\y.
\qe
Let, $\forall\aginset$, 
$\ki$ define the first time step
when agent~$i$ is found to be infected, i.e.,
\eq\label{eq:ki}
	\ki:=
	\inf
	\{
		\tdisc\mid
		\x=1
	\}.
\qe
In this paper, we use the convention that $\inf \emptyset = \infty$; therefore, if agent~$i$ does not get in contact
with the infection, then
$\ki=\infty$.
Agent $i$'s \textit{recovery time}
is the number of iterations from
$k_i+1$ until
agent~$i$ is recovered, 
and is denoted by $\rectime\in\posint$.
Formally, $\forall\aginset$, $\rectime$ is the smallest integer such that
\eq\label{eq:necRec}
	\y[i][\ki+\rectime]=1.
\qe
The system is heterogeneous, i.e.,
agents may have different
recovery times.

\subsection{Dynamics}

Consider a pair of agents,
i.e., $\{i,j\}\subset\nodes$,
such that $\{i,j\}\in\arcs[k]$ at a given iteration $k$.
Assume agent~$i$ is infected ($\x=1$)
whilst agent~$j$ is susceptible.
The probability that agent~$i$
infects its neighbor agent~$j$ at time $k$
is $p_{ij}(k)\in[0,1]$. Formally,
\eq
	\begin{split}	
	\mathrm{P}\Big(
		\x[j][k+1]=1
			\mid		
			\x[j]=0,~
			\x=1,~\\
			\y[j]=0,~
			\neighinf[k][j]=\{i\}
	\Big) = p_{ij}(k).
	\end{split}
\qe
\newline
%
We model this behavior by
defining
a random variable
drawn out of a Bernoulli distribution
with probability $p_{ij}(k)$, i.e.,
$$
	\forall\tdisc,~\forall\aginset,~\forall j\in\neigh,~
	\coeff\sim\mathcal{B}(p_{ji}(k)),
$$
called \textit{contagion coefficient}.
We assume that,
$\forall \{i,j\}\in\arcs[k]$,
$\coeff=\coeff[j][i]$.
If $\coeff[j][i]=1$,
an infected 
agent~$i$ infects a
susceptible neighbor agent~$j$
at iteration $k$.
Formally,
$\forall\tdisc$,
$\forall\{i,j\}\in\arcs[k]$,
\eq\label{eq:howinfection}
	\left. 
	\begin{aligned}
		\coeff[j][i] && =1\\
		\x && =1\\
		\y[j] && =0
	\end{aligned}
	\ \right\}
	~\implies
	\x[j][k+1]=1.
\qe
Consider, e.g., the case of an infection spreading across a population.
Each contagion coefficient models
whether two neighboring agents,
at a given time step,
have a contact, which would cause a contagion in case
one of the two is infected.
\newline
We also consider that
the infection can hit an individual
without being spread
from an infected agent in the network,
but rather by being transmitted from outside the network.
To this end,
let
$$
	\extinf=1
$$
denote that agent~$i$ gets in contact with the
infection (from outside of the network)
at time~$\tdisc$. 
Standard models considering closed 
populations (see, e.g.,~\cite{koher2016infections})
can be represented by having
$\extinf[i][0]=1$, with agent~$i\in\nodes$
defined traditionally as
\textit{patient zero}, and $u_j(k)=0$, $\forall j\in\nodes\setminus\{i\}$, $\forall k\in\natnum$.
Note that, $\forall\aginset$, $\forall\tdisc$,
\eq\label{eq:extinf}
\left. 
	\begin{aligned}
		\y &&=0\\
		\extinf &&=1
    \end{aligned}
    \ \right\}
	\implies
	\x[i][k+1]=1~.
\qe
\begin{definition}
	Agent~$i$ is \emph{protected against external infection}
	if, $\forall\tdisc$,
	$\extinf=0$.
	Otherwise, agent~$i$ is said to be
	\emph{subject to external infection}.
\end{definition}
Once an agent, say $\aginset$,
is infected, we know that 
it will recover in $\rectime$ iterations. 
We define a variable,
referred to as \textit{infection stopwatch},
formally defined as,
$\forall\aginset$, $\forall\tdisc$,
\eq\label{eq:si}
	s_i(k):=
	\begin{cases}
		0
		&\text{if }k\leq\ki\\
		\rectime + 1
		&\text{if }k\geq\ki+\rectime + 1\\
		k-\ki
		&\text{otherwise}
	\end{cases}~~.
\qe
Such a variable
counts the time steps since agent~$i$ got infected.
If no infection is developed up to time~$k$,
then this variable is~$0$.
If agent~$i$ has already recovered at time~$k$,
this variable equals $\rectime + 1$.
\newline
The dynamics of each
agent $i\in\nodes$ evolves
according to the
non-linear
discrete-time system
\begin{align}\label{eq:system}
	\begin{cases}
		x_i(k+1)
		=
		\varrho
		\left(R_i-s_i(k)\right)
		\varrho
		\Big(
		x_i(k) + \\ \qquad\qquad\qquad +
		\sum\limits_{j\in\neigh} 
		\coeff x_j(k)
		+\extinf
		\Big)
		\\
		s_i(k+1)
		=
		s_i(k) 
		+ 
		x_i(k)\\
		y_i(k) \phantom{{}+{}1}= 1 - \varrho(\rectime - s_i(k))
	\end{cases}
\end{align}	
where $\varrho:\real\mapsto\{0,1\}$
is the step function
\[
\varrho(\circ) =
\begin{cases}
0 & \mbox{ if }\circ\leq 0\\
1 & \mbox{ if }\circ>0
\end{cases}.
\]
In what follows,
let
$\xvec$ and $\svec$ be $n$-dimensional
vectors stacking, respectively,
the infection variables and
the infection stopwatch variables
of all agents at time step $k$, i.e.,
$\forall\tdisc$,
$\forall i\in\{1,\dots,n\}$,
$
	[\xvec]_i=\x,~
	[\svec]_i=\s~.
$
\begin{proposition}
		System (\ref{eq:system}) 
		guarantees that properties
		(\ref{eq:yimpliesx}), (\ref{eq:yincrease}), (\ref{eq:necRec}), (\ref{eq:howinfection})-(\ref{eq:si})
		are satisfied, $\forall\aginset$, $\forall\tdisc$.		
	\begin{proof}
		{The proof is omitted due to space limitation}.
	\end{proof}
\end{proposition}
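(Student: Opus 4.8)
The plan is to split the properties into those that follow immediately from the algebraic form of \eqref{eq:system} and the step function $\varrho$, and the one property that genuinely requires tracking the trajectory in time, namely the closed form \eqref{eq:si} of the stopwatch (from which \eqref{eq:necRec} will also drop out).

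First I would dispatch \eqref{eq:yimpliesx}, \eqref{eq:yincrease}, \eqref{eq:howinfection} and \eqref{eq:extinf}, all of which rest on two features of $\varrho$: it vanishes on $(-\infty,0]$ and equals $1$ on $(0,\infty)$, and it is nondecreasing. Indeed, if $y_i(k)=1$ then $\varrho(R_i-s_i(k))=0$, so the product defining $x_i(k+1)$ is $0$, giving \eqref{eq:yimpliesx}. For \eqref{eq:yincrease}, the recursion $s_i(k+1)=s_i(k)+x_i(k)$ with $x_i(k)\in\{0,1\}$ shows that $s_i$ is nondecreasing, hence $R_i-s_i(k+1)\le R_i-s_i(k)$, and monotonicity of $\varrho$ yields $y_i(k+1)\ge y_i(k)$. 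For \eqref{eq:howinfection} and \eqref{eq:extinf}, whenever $y_j(k)=0$ the gating factor $\varrho(R_j-s_j(k))$ equals $1$, and either the neighbor term $\xi_{ji}x_i(k)=1$ or the external term $u_i(k)=1$ makes the argument of the second $\varrho$ at least $1$; here I would invoke the symmetry $\xi_{ij}=\xi_{ji}$ and the nonnegativity of every summand to argue that the sum dominates the single relevant term, so the product equals $1$.

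The heart of the argument is \eqref{eq:si}, which I would prove by induction on $k$, using the natural initialization $s_i(0)=0$ (consistent with no agent being infected before time $0$) and the definition \eqref{eq:ki} of $k_i$. If $k_i=\infty$ then $x_i(k)=0$ for every $k$, so $s_i$ stays $0$ and \eqref{eq:si} holds trivially. For finite $k_i$, I would run a joint induction establishing, for $m=0,1,\dots,R_i$, the two coupled facts $s_i(k_i+m)=m$ and $x_i(k_i+m)=1$: the base case uses $x_i(k_i)=1$ and $s_i(k_i)=0$ (the latter because $x_i$ vanishes before $k_i$), and the inductive step combines $s_i(k_i+m+1)=s_i(k_i+m)+x_i(k_i+m)=m+1$ with $\varrho(R_i-m)=1$ for $m<R_i$ and the self-sustaining term $x_i(k_i+m)=1$ to conclude $x_i(k_i+m+1)=1$. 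At $m=R_i$ the stopwatch reaches $R_i$, so $s_i(k_i+R_i+1)=R_i+1$ and $\varrho(R_i-R_i)=0$ forces $x_i(k_i+R_i+1)=0$; a short final induction shows that for $k\ge k_i+R_i+1$ both $x_i(k)=0$ and $s_i(k)=R_i+1$ persist. Collecting the three regimes reproduces \eqref{eq:si}, and evaluating $y_i(k_i+R_i)=1-\varrho(0)=1$ while $y_i(k_i+m)=1-\varrho(R_i-m)=0$ for $0\le m<R_i$ gives \eqref{eq:necRec}, including minimality of $R_i$.

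The step I expect to be the main obstacle is precisely this joint induction, because $x_i$ and $s_i$ are mutually coupled in \eqref{eq:system}: the value of $x_i(k+1)$ depends on $s_i(k)$ through the gate $\varrho(R_i-s_i(k))$, while $s_i(k+1)$ depends on $x_i(k)$, so one cannot solve for $s_i$ first and $x_i$ afterward. Particular care is needed at the boundary index $k=k_i+R_i$, where the agent is still infected ($x_i=1$) yet already flagged as recovering ($y_i=1$), so that the transition into the ``recovered'' regime occurs exactly one step later. Making the initialization $s_i(0)=0$ explicit and keeping the $k_i=\infty$ case separate are the two bookkeeping points I would be most careful about.
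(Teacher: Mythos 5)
The paper gives no proof of this proposition (it is explicitly omitted for space), so there is nothing to compare against; your argument has to stand on its own, and it does. The decomposition is sound: \eqref{eq:yimpliesx}, \eqref{eq:yincrease}, \eqref{eq:howinfection} and \eqref{eq:extinf} indeed follow pointwise from the two facts that $\varrho$ vanishes on $(-\infty,0]$, equals $1$ on $(0,\infty)$, and is nondecreasing, combined with nonnegativity of every summand inside the second $\varrho$; and the closed form \eqref{eq:si} genuinely requires the joint induction you describe, since $x_i$ and $s_i$ are mutually coupled through the gate $\varrho(R_i - s_i(k))$ and the self-sustaining term $x_i(k)$. Your three regimes ($s_i(k_i+m)=m$, $x_i(k_i+m)=1$ for $0\le m\le R_i$; the switch at $k_i+R_i+1$ where $\varrho(R_i-R_i)=0$ kills $x_i$; and the absorbing state $s_i\equiv R_i+1$, $x_i\equiv 0$ afterwards) check out, and reading off $y_i(k_i+m)=1-\varrho(R_i-m)$ gives \eqref{eq:necRec} together with minimality of $R_i$. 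Two small remarks: the initialization $s_i(0)=0$ that you flag is indeed necessary (it is forced by \eqref{eq:si} at $k=0$, though the paper never states it for system \eqref{eq:system}), so making it a hypothesis is the right call; and the symmetry $\xi_{ij}(k)=\xi_{ji}(k)$ is not actually needed for \eqref{eq:howinfection}, because the hypothesis $\xi_{ji}(k)=1$ already names exactly the coefficient multiplying $x_i(k)$ in agent $j$'s update equation — invoking it is harmless but superfluous. Your observation that $x_i(k_i+R_i)=1$ and $y_i(k_i+R_i)=1$ coexist is consistent with the paper's convention that $y_i(k)=1$ flags recovery at iteration $k+1$, which is precisely what \eqref{eq:yimpliesx} encodes.
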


\begin{remark}
	Note that the state-space cardinality for system~\eqref{eq:system} is $2^n \prod_{i\in\nodes}(\rectime + 2)$.
\end{remark}

\begin{example}\label{ex:time-based}
	Consider the static network $(\nodes, \arcs)$ 
	depicted in \emph{Fig.~\ref{fig:network}},
	in which an infection is spreading with dynamics~(\ref{eq:system}).
	Agent~$1$ gets in contact with the infection
	at time~$0$, i.e.,
	$\extinf[1][0]=1$. We simulate three different scenarios
	in order to address the impact that different parameters have
	on the spreading of the infection.
	\newline 
	Scenario 1:
	let, $\forall(j,i)\in\arcs$, $\forall k\in\natnum$, $p_{ij}(k)=0.2$ and 
	$\forall\aginset$, $\rectime\in[3,5]$ (randomly extracted from this set). We can observe the infection
	dynamics in \emph{Fig.~\ref{fig:epidemics_pij_02_Rimax_5}}. Some agents never get infected. In fact,
	after time $k=40$, a subset remains susceptible and
	the infection disappears.	
	\newline 
	Scenario 2:	in this case, we increment the maximum recovery time to $30$ iterations.
	This implies, as in \emph{Fig.~\ref{fig:epidemics_pij_02_Rimax_30}}, that 
	the peak of infection is wider and longer-lasting. 
	Also, the whole population gets in contact with the disease.
	\newline 
	Scenario 3: 
	on the other hand, if we increment the
	possibility of infecting another agent (namely, $\forall(j,i)\in\arcs$, $\forall k\in\natnum$, $p_{ij}(k)=0.5$),
	but we keep the maximum recovery time as $5$ iterations,
	we obtain a fast spread of the infection, in which the whole population
	gets in contact with the disease, but the infection disappears after $30$ steps (\emph{Fig.~\ref{fig:epidemics_pij_05_Rimax_5}}).
\begin{figure}
	\includegraphics[width=\columnwidth]{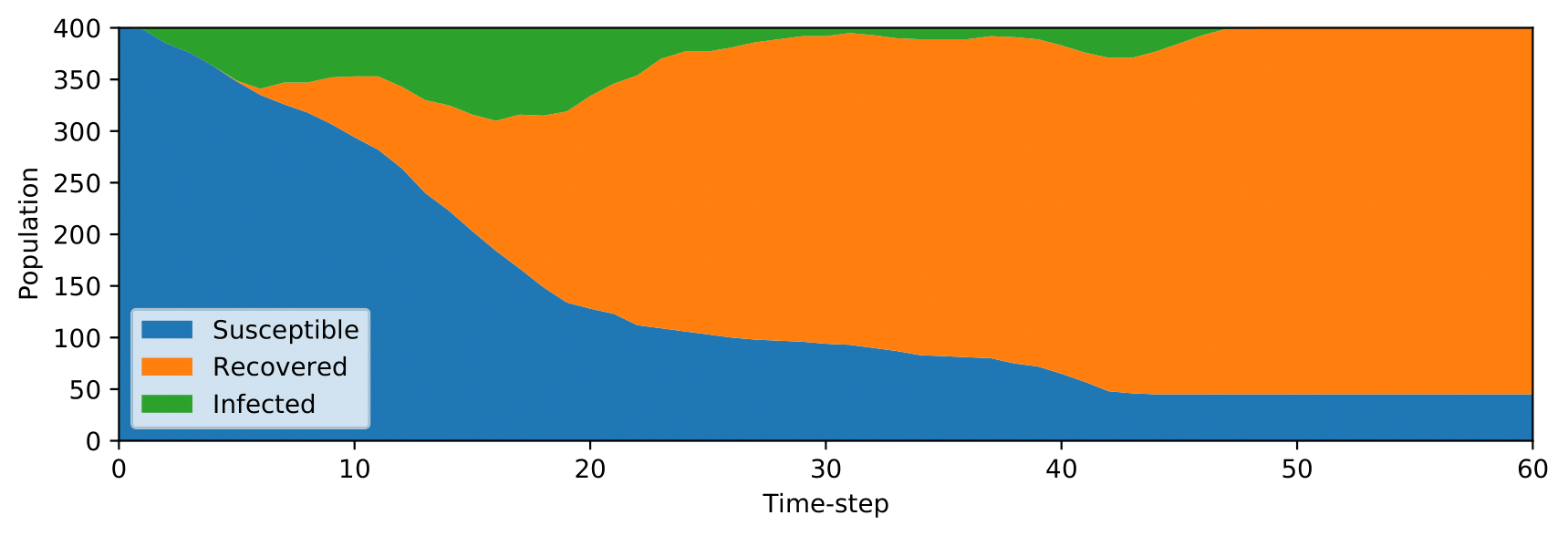}
	\caption{Infection spreading with $p_{ij}=0.2$ and $\rectime\in[3,5]$.}
	\label{fig:epidemics_pij_02_Rimax_5}
\end{figure}
\begin{figure}
	\includegraphics[width=\columnwidth]{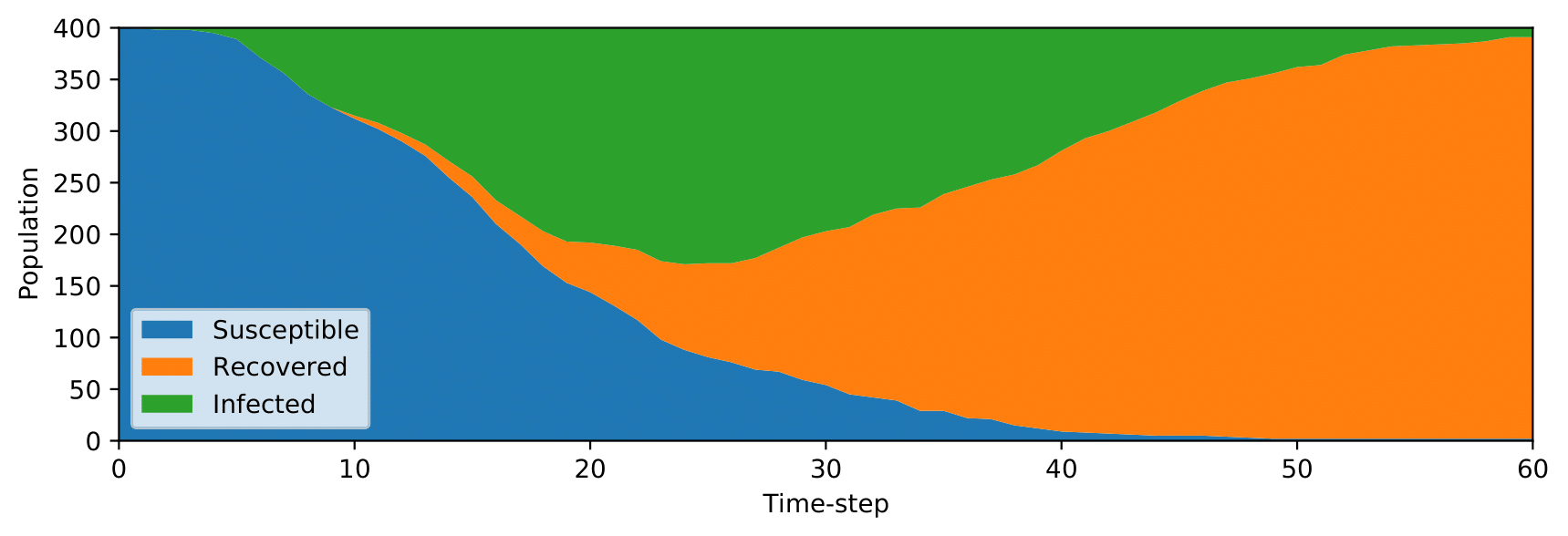}
	\caption{Infection spreading with $p_{ij}=0.2$ and $\rectime\in[3,30]$.}
	\label{fig:epidemics_pij_02_Rimax_30}
	\end{figure}
\begin{figure}
	\includegraphics[width=\columnwidth]{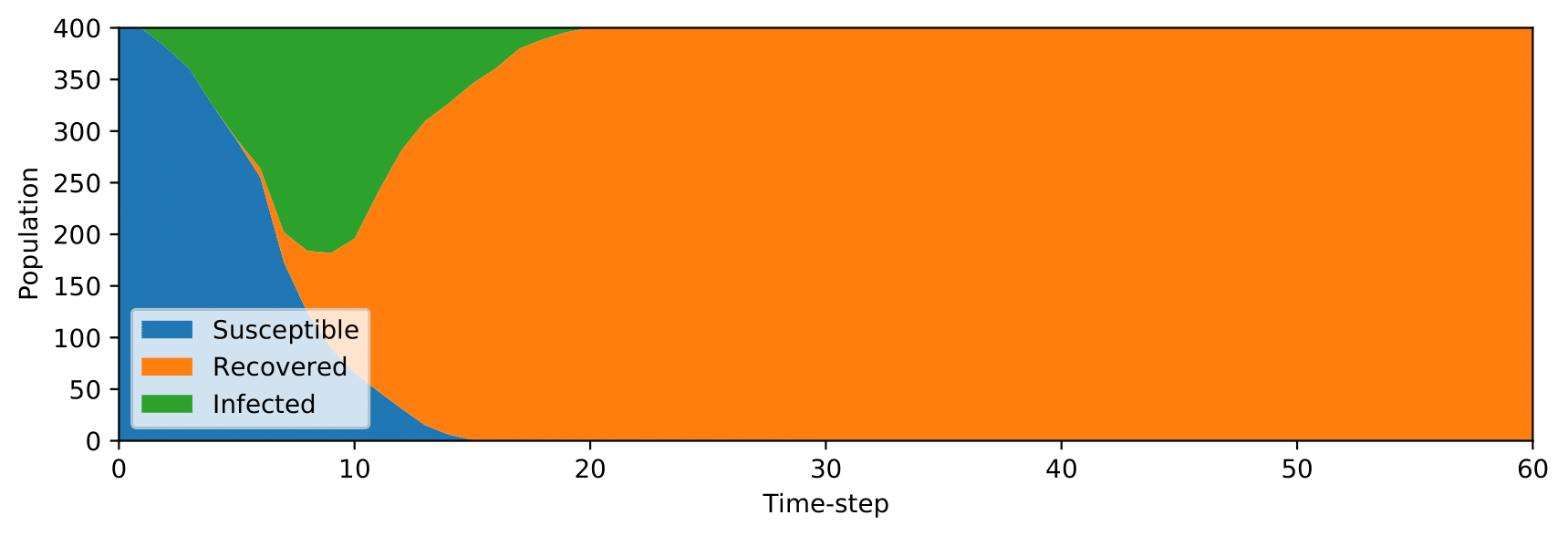}
	\caption{Infection spreading with $p_{ij}=0.5$ and $\rectime\in[3,5]$.}
	\label{fig:epidemics_pij_05_Rimax_5}
\end{figure}
\end{example}
Running each one of these simulation
takes on average $0.95\mathrm{s}$ on a PC with an Intel i7 processor at 2.90GHz with Python 3.6.
The running time increases with the simulation horizon, which must be set large enough to capture the entire disease evolution.
As shown in the examples, an adequate choice of the simulation horizon greatly depends on parameters of individuals such as $p_{ij}(k)$ and $\rectime$.
This proves that, despite the benefits of agent-based models,
the time needed for simulation not only increases with network size, but also depends on agent parameters.
As motivated by~\cite{pellis2015eight}, this paper
is concerned with finding a formal method to run 
multi-agent simulation on networks 
with low computational effort.
\begin{remark}
	Note that $\x$, $\y$, $\ki$, and $\s$
	are \emph{random variables}, as they depend on contagion coefficients $\coeff$.
\end{remark}

	\section{Model Dynamics with Boolean Algebra} \label{se:boolean}
In the following, we show that system~\eqref{eq:system}
can be rewritten
as a dynamical system in the Boolean algebra.
This proves that our model is 
a stochastic dynamical
generalization of~\cite[Eq. (3)]{koher2016infections}.
In fact, our system
is represented in the formalism
of dynamical systems, whilst 
\cite[Eq. (3)]{koher2016infections}
has a state vector whose dimension increases
with iterations. Moreover, in~\cite[Eq. (3)]{koher2016infections}
contagion is deterministic, whereas in~\eqref{eq:system}
it depends, at every time step $\tdisc$,
on the stochastic realizations of
variables $\{\coeff\}_{\{i,j\}\in\arcs[k]}$.
Moreover,
\cite[Eq. (3)]{koher2016infections} does 
not consider infections originating outside the considered network, but
only epidemics within a closed population.
\begin{remark}
	Due to space limitation,
	in this section
	we consider~\eqref{eq:system}
	without the presence of $\extinf$.
\end{remark}
\subsection{Fundamentals of Boolean Algebra}
Consider two Boolean variables $a$ and $b$.
We define, respectively, disjunction, conjunction
and negation operations as,
$\forall a\in\{0,1\}$,
$\forall b\in\{0,1\}$,
\[
	a\vee b = \max(a,b),\quad
	a\wedge b = \min(a,b),\quad
	\neg a = 1 - a.
\]
We extend the operations to matrices.
To this end, consider three Boolean matrices, i.e.,
$A,B\in\{0,1\}^{n\times m}$, $C\in\{0,1\}^{m\times p}$.
Disjunction, conjunction,
and negation operations are defined as
\begin{align*}
	[A \vee B]_{ij} &= a_{ij} \vee b_{ij} &i=1\dots n,~j=1\dots m,\\
	[A \wedge C]_{ij} &= \bigvee_{h=1}^m a_{ih} \wedge b_{hj}&i=1\dots n,~j=1\dots p,\\
	[\neg A]_{ij} &= \neg a_{ij}&i=1\dots n,~j=1\dots m.
\end{align*}
Furthermore, let us define the element-wise conjunction 
operation (Hadamard product in the Boolean algebra)
$\circleland$, i.e., 
\[
	[A \circleland B]_{ij} = a_{ij} \wedge b_{ij}, 
	\qquad i=1\dots n,~j=1\dots m.
\]
\subsection{Infection Model in The Boolean Algebra}
We define the following dynamical system in the Boolean
algebra with $\tdisc$ the iteration index:
\begin{align}\label{eq:system_bool} 
	\begin{cases}
		\xbo[k+1]
		&=
		\left(\coeffmatrix(k) \wedge \xbo\right) \circleland (\neg \ybo)
		\\
		\ybo[k+1]
		&=
		\ybo
		\vee 
		\xbo[{k-\rectime[]+1}]
	\end{cases},
\end{align}
where
\begin{itemize}
	\item $\xbo\in\{0,1\}^n$ is the \textit{Boolean infection vector},
	such that $\xbo[k][i]$ is $1$ if individual $i$ is infected at
	time~$k$, $0$ otherwise;
	\item $\ybo\in\{0,1\}^n$ is the \textit{Boolean recovery vector},
	such that $\ybo[k][i]$ is $1$ if individual $i$ is recovered at
	time~$k$, $0$ otherwise;
	\item $\coeffmatrix(k)\in\{0,1\}^{n\times n}$, such that
	\begin{align*}
		\coeffmatrix_{ij}(k):=
		\begin{cases}
			\coeff&\text{if }\{i,j\}\in\arcs[k]\\
			1&\text{if }i=j\\
			0&\text{otherwise}
		\end{cases}\qquad;
	\end{align*}
	\item $\xbo[{k-\rectime[]+1}]$ is short hand notation for the 
	$n$-dimensional vector
	of elements $\left[\xbo[{k-\rectime[]+1}]\right]_i= \xbo[{k-\rectime+1}][i]$,
	$i=1\dots n$.
\end{itemize}
To initialize~\eqref{eq:system_bool} we define $\xbo[k][i]=0$, $\forall k<0$.
Note that, if 
$\ybo[0]=0$, the
second equation of system~\eqref{eq:system_bool}
can be written as
\[
	\ybo[k+1] = \bigvee_{t=0}^{k-\rectime+1} \xbo[t].
\]
In this case, 
and considering all contagion coefficients equal to one,
i.e., $\forall\{i,j\}\in\arcs[k]$, $\coeff=1$,
it is immediate to see the equivalence
of system~\eqref{eq:system_bool}
with \cite[Eq.~(3)]{koher2016infections}.
\begin{remark}
    The state of system~\eqref{eq:system_bool} at iteration $k$ is
    \begin{align*}
    	\mathbf{x}_s^b(k) := [& \xbo[k][1],\dots,\xbo[{k-\rectime[1]+1}][1],\dots,\\
    	& \xbo[k][n],\dots,\xbo[{k-\rectime[n]+1}][n],\\
    	&\ybo[k][1],\dots,\ybo[k][n]]^\top \in\{0,1\}^{n_s},
    \end{align*}
    with $n_s = n + \sum_{i\in\nodes} \rectime$.
	Unlike~\cite[Eq.(3)]{koher2016infections},
	in system~\eqref{eq:system_bool} the state dimension
	does not grow with time, but has dimension
	$n_s$. 
\end{remark}

\begin{remark}
	The state-space cardinality for system~\eqref{eq:system_bool} is $2^{n_s}$.
	It is immediate to show that system~\eqref{eq:system} has a lower state-space cardinality than system~\eqref{eq:system_bool} if 
	\[ 
        \sum_{i\in\nodes} \log_2(\rectime + 2) \leq \sum_{i\in\nodes} \rectime.
	\]
\end{remark}

In what follows,
we prove that~\eqref{eq:system_bool}
can be seen as
the Boolean formulation
of~\eqref{eq:system}.
\begin{proposition}
Assume, $\x[i][0]=\xbo[0][i]$, $\s[i][0]=\ybo[0][i]=0$, 
and, by definition, $\xbo[k][i]=0\quad  \forall k<0$.
%
Vectors $\xvec$ and $\svec$,
respectively $\xbo$ and $\ybo$,
evolving according to~\eqref{eq:system},
respectively~\eqref{eq:system_bool},
satisfy, $\forall\tdisc$, $\forall\aginset$,
\eq\label{eq:indStat}
	\x = \xbo[k][i] \quad 
	\text{and}\quad
	\s\geq \rectime \iff
	\ybo[k][i]=1.
\qe
\end{proposition}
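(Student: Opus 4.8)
The plan is to prove both equivalences in~\eqref{eq:indStat} simultaneously by strong induction on $k$, carrying the hypothesis that, for every agent $i$ and every $t\le k$, $x_i(t)=x_i^b(t)$ and $\big(s_i(t)\ge R_i \Leftrightarrow y_i^b(t)=1\big)$. The base case $k=0$ is immediate: $x_i(0)=x_i^b(0)$ holds by assumption, while $s_i(0)=0<R_i$ (as $R_i\ge 1$) and $y_i^b(0)=0$, so both sides of the recovery equivalence are false.

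For the induction step I would first treat the infection variable. Taking the $i$-th component of the first line of~\eqref{eq:system_bool} and using the structure of $\Xi(k)$ (diagonal entries $1$, off-diagonal entry $\xi_{ij}(k)$ exactly on arcs) gives
\[
x_i^b(k+1)=\Big(x_i^b(k)\vee\bigvee_{j\in\mathcal{N}_i(k)}\xi_{ij}(k)\wedge x_j^b(k)\Big)\wedge\neg y_i^b(k).
\]
The key observation is that on $\{0,1\}$ the step function $\varrho$ of a sum of nonnegative terms coincides with their disjunction, and an ordinary product coincides with conjunction; hence $\varrho\big(x_i(k)+\sum_{j\in\mathcal{N}_i(k)}\xi_{ij}(k)x_j(k)\big)$ equals $x_i(k)\vee\bigvee_{j\in\mathcal{N}_i(k)}\xi_{ij}(k)\wedge x_j(k)$. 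Substituting the inductive identities $x_h(k)=x_h^b(k)$ and the relation $\varrho(R_i-s_i(k))=\neg y_i^b(k)$ — valid because $\varrho(R_i-s_i(k))=0\Leftrightarrow s_i(k)\ge R_i\Leftrightarrow y_i^b(k)=1$ by the inductive recovery equivalence — the first line of~\eqref{eq:system} collapses exactly into the Boolean update, so $x_i(k+1)=x_i^b(k+1)$.

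For the recovery variable I would reduce both conditions to a single threshold in time. From the closed form~\eqref{eq:si}, which holds along trajectories of~\eqref{eq:system} by the preceding Proposition, a direct check of the three cases yields $s_i(m)\ge R_i \Leftrightarrow m\ge k_i+R_i$ for all $m$, with the convention $k_i+R_i=\infty$ when $i$ is never infected. On the Boolean side, unrolling the second line of~\eqref{eq:system_bool} from $y_i^b(0)=0$ gives $y_i^b(m)=\bigvee_{t=0}^{\,m-R_i}x_i^b(t)$; using the already-established $x_i^b(t)=x_i(t)$ for $t\le k$ and the definition~\eqref{eq:ki} of $k_i$, this disjunction equals $1$ exactly when an infection time $t\le m-R_i$ exists, i.e. when $k_i\le m-R_i$. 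Thus $y_i^b(m)=1\Leftrightarrow m\ge k_i+R_i\Leftrightarrow s_i(m)\ge R_i$; setting $m=k+1$ closes the induction.

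The recovery step, not the infection step, is the main obstacle. Since $y_i^b(k+1)$ depends on the delayed value $x_i^b(k-R_i+1)$, a \emph{strong} induction is needed so that the identity $x_i^b=x_i$ is available up to $R_i$ steps in the past; one must also note that when $m-R_i<0$ the unrolled disjunction is empty and equals $0$, consistently with $m<k_i+R_i$. The genuinely delicate point is aligning the saturating, nondecreasing stopwatch $s_i$ governed by~\eqref{eq:si} with the once-latched recovery flag $y_i^b$: expressing both through the common event $m\ge k_i+R_i$ is precisely what reconciles the two representations and rules out any off-by-one mismatch at the recovery instant.
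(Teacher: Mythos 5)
Your proof is correct and follows essentially the same route as the paper's: strong induction on $k$, with the infection step handled by identifying $\varrho$ of a sum with a Boolean disjunction and $\varrho(R_i-s_i(k))$ with $\neg y_i^b(k)$, and the recovery step handled by matching past infection events against the unrolled disjunction $\bigvee_t x_i^b(t)$. The only cosmetic difference is that you phrase the recovery equivalence through the common threshold $m\geq k_i+R_i$ (via the closed form~\eqref{eq:si} and Proposition~1), whereas the paper quantifies existentially over past infection times $\ell\in\{R_i-1,\dots,h\}$ — the same argument in different clothing, and yours is if anything slightly more explicit about both directions of the equivalence.
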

\begin{proof}
	The proof follows by strong induction.
	\newline
	Let the \textit{induction statement} 
	be~(\ref{eq:indStat}).
	The base case, for $k=0$,
	is trivially verified
	in the hypothesis.
	\newline
	The \textit{inductive hypothesis} is
	that, $\forall k\leq h,~h\in\natnum$,
	(\ref{eq:indStat}) is true.
	Thus, 
	we need to prove
	that 
	(\ref{eq:indStat}) must also be true for $k=h+1$.
%
	\newline
	First of all, let us note that, by the inductive hypothesis, $\forall\aginset$,
	
	{\footnotesize
	\begin{align}
	    \varrho(
	    \x[i][h] + \sum_{j\in\neigh[h]} \coeff[i][j][h] \x[j][h]
	    )
	    &= \left(
	    \x[i][h] \vee
	    \bigvee_{j\in\neigh[h]} \coeff[i][j][h] \wedge \x[j][h]
	    \right)
	    \nonumber \\
	    &= \left[
	    \coeffmatrix(h) \wedge \xvec[h]
	    \right]_i. \label{eq:aux1}
	\end{align}
	}%
	By the induction hypothesis
	and by definition of $\varrho(\circ)$,
	we have, $\forall\aginset$,
	$
		\varrho(\rectime-s_i(h))=0\iff\ybo[h][i]=1,
	$
	equivalently, $\forall\aginset$,
	\eq\label{eq:aux2}
		\varrho(\rectime-s_i(h))=\neg\ybo[h][i],
	\qe
	By bringing together (\ref{eq:aux1}) and (\ref{eq:aux2}),
	we conclude that,
	by the inductive hypothesis, $\forall\aginset$,
	\eq\label{eq:firstIndProven}
		\x[i][h+1]=\xbo[h+1][i].
	\qe
	
	At this point, to conclude the proof,
	we need to prove that the induction hypothesis
	also implies, $\forall\aginset$,
	\eq\label{Eq:secondIndToPRove}
		\s[i][h+1]\geq \rectime \iff
		\ybo[h+1][i]=1.
	\qe
	To this end,
	note that,
	by~(\ref{eq:system}), $\forall\aginset$,
	\begin{multline}\label{eq:secInd1}
		\s[i][h+1]\geq\rectime
		\iff
		\\ \x[i][\ell-\rectime+1]=1\quad \mbox{for some }\ell\in\{\rectime-1,\dots,h\}
		~.
	\end{multline}
	By the induction hypothesis, $\forall\aginset$,
	\eq\label{eq:secInd2}
		\x[i][\ell-\rectime+1]=\xbo[\ell-\rectime+1][i].
	\qe
	By (\ref{eq:system_bool}), $\forall\aginset$,
	\eq\label{eq:secInd3}
		\xbo[\ell-\rectime+1][i]=1
		\implies
		\ybo[h+1]=1~,
	\qe
	for some
	$\ell\in\{\rectime-1,\dots,h\}$.
	By bringing together (\ref{eq:secInd1}), 
	(\ref{eq:secInd2}), and (\ref{eq:secInd3}),
	under the induction hypothesis,
	(\ref{Eq:secondIndToPRove}) is verified.
	This proves (\ref{eq:indStat}) for $k=h+1$,
	thus implying, by strong induction, 
	that  (\ref{eq:indStat}) always holds.
	This concludes the proof.
\end{proof}

    \section{Contagion Graph} \label{se:contagion}
As discussed in literature and
seen in Example~\ref{ex:time-based},
an important issue for models
of epidemics over networks is
the computational burden.
In this section,
we investigate a 
possible way to obtain
simulation results for
system~(\ref{eq:system}), and therefore~\eqref{eq:system_bool}, with 
reduced computational complexity.
\subsection{Contagion Graph as equivalent model} \label{su:contagion_equivalent}
To simplify exposition,
consider the following two assumptions that
will be relaxed in future work.
\begin{assumption}
	\label{en:assumption1}
	The network topology is
	constant over time, i.e.,
	$\forall k\in\natnum$,
	$\arcs[k]=\arcs[]$.
\end{assumption}
\begin{assumption}
	\label{en:assumption2}
	The stochastic process $\{\coeff\}_{k\in\natnum}$ is stationary,
	i.e.,
	$\forall k\in\natnum$, $\forall\{i,j\}\in\arcs[]$, $p_{ij}(k)=p_{ij}$.
%
%
\end{assumption}
%
Assumption~\ref{en:assumption1}
corresponds to the case in which
neither restricting measures are taken
in order to contain the spread of the infection,
nor new connections between
individuals are established.
With Assumption~\ref{en:assumption2},
we consider the probability of
interaction between any pair of
individuals to be constant over time.

Consider $\ki$ as defined in (\ref{eq:ki}).
Set $\{\ki \}_{\aginset}$
%
provides a full
description of the evolution of the epidemic; 
if $\inftime\in\natnum$, individual
$i$ gets infected at time $\inftime$ 
and recovers
at time $\inftime+\rectime+1$,
whilst, if 
$\inftime=\infty$, agent~$i$ never gets infected.
\begin{definition}\label{def:tforinf}
	The random variable
	$\tforinf(k)$ denotes, $\forall\{i,j\}\in\arcs$,
	the number of time-steps that it takes agent~$j\in\neigh[]$
	to infect its neighbor~$i$,
	assuming $\inftime[j]=k$, and ignoring the effect of other
	individuals.
	Formally, 
	\eq \label{eq:tforinf}
		\tforinf(k) := 1+\inf\{ h\in\{k,\ldots,k+\rectime[j]-1\} \ |\ 
		\coeff[i][j][h]=1 \}.
	\qe
\end{definition}
\begin{definition}\label{def:tauOpenInf}
	Given an agent~$i$,
	variable $\tforex\in\natnum$ denotes the earliest time of a
	possible external infection, i.e.,
	$$
		\tforex := 1+\inf\{\tdisc\mid \extinf=1\}.
	$$
\end{definition}
It is clear by definition that
agents protected against external infections have
$\tforex=\infty$.
\begin{definition}\label{def:setExtInf}
	Let $\exnodesnotlab\subseteq\nodes$
	denote the set of agents subject to
	external infections, i.e.,
	$$
		\exnodesnotlab :=\{
			\aginset\mid
			\tforex\in\natnum
		\}.
	$$
	Let now $\exnodes$ be a new set of
	nodes labeled with negative numbers,
	such that each node in $\nodes^{\mathrm{ext}}$
	corresponds to one node in $\exnodes$, i.e.,
	$$
		\exnodes:=
		\{
			-i
			\mid
			i\in\exnodesnotlab
		\}.
	$$
\end{definition}
\begin{lemma}\label{lemma:stat}
	Under Assumptions 1-2,
	$\tforinf(k)$ is a stationary process
	with probability distribution
	\eq
		\label{eq:SIR_distrib}\small 
		\probab{\tforinf= \tau}=
		\begin{cases}
			p_{ij}(1-p_{ij})^{\tau-1} &
			\mbox{if } \tau \in\{1,\ldots,\rectime[j]\}\\
			(1-p_{ij})^{\rectime[j]} &
			\mbox{if }\tau=\infty\\
			0 &
			\mbox{otherwise}
		\end{cases}.
	\qe	
	\begin{proof}
		If $\tau\in\{1,\ldots,\rectime[j]\}$, then
		$\tforinf(k)=\tau$ corresponds to event
		$\coeff[i][j][k]=\ldots=\coeff[i][j][k+\tau-1]=0$,
		$\coeff[i][j][k+\tau]=1$.
		Since $\coeff$ is a Bernoulli distributed random
		variable, in this case $\tforinf(k)$ assumes
		the geometric distribution $\mbox{Geo}(p_{ij})$,
		as in the first line of~\eqref{eq:SIR_distrib}.
		Note that for
		$\tau\in\posint\setminus\{1,\dots,\rectime[j]\}$,
		$\tforinf(k)=\tau$ has zero probability, since it
		would imply $h>k+\rectime[j]-1$ in~\eqref{eq:tforinf}.
		Therefore, event $\coeff=\ldots=\coeff[i][j][k+R_j-1]=0$,
		which has probability $(1-p_{ij})^{\rectime[j]}$, corresponds
		to
		\begin{align*}
			\tforinf(k) &= 1 + \inf_{\substack{h\in \{0,\ldots,\rectime[j]-1\}\cap (\posint \setminus\{1,\dots,\rectime[j]\})  \\
					\coeff[i][j][k+h]=1}} h\\
			&=
			1 + \inf_{h\in\emptyset} h = \infty. 
		\end{align*}
		This concludes the proof.
	\end{proof}
\end{lemma}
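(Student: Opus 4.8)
The plan is to recognize $\tforinf(k)$ as the first-success time of a truncated sequence of Bernoulli trials and to read off its law directly. Under Assumption~\ref{en:assumption1} the arc $\{i,j\}$ is present at every iteration, and under Assumption~\ref{en:assumption2} the coefficients $\coeff[i][j][k],\coeff[i][j][k+1],\dots$ share the common success probability $p_{ij}$; together with the model's premise that distinct time steps yield independent Bernoulli draws, this makes $\coeff[i][j][k],\dots,\coeff[i][j][k+\rectime[j]-1]$ an independent, identically distributed $\mathcal{B}(p_{ij})$ sequence. Since the infimum in~\eqref{eq:tforinf} ranges over a window of length $\rectime[j]$ whose \emph{relative} offsets do not depend on the absolute index $k$, the law of $\tforinf(k)$ will turn out to be independent of $k$, which is exactly the asserted stationarity.

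First I would treat the finite values. For $\tau\in\{1,\dots,\rectime[j]\}$, the event $\{\tforinf(k)=\tau\}$ is precisely the event that the first success in the window occurs at offset $\tau-1$, i.e.\ $\coeff[i][j][k]=\dots=\coeff[i][j][k+\tau-2]=0$ and $\coeff[i][j][k+\tau-1]=1$. By independence this probability factorizes into $\tau-1$ failures and one success, giving $(1-p_{ij})^{\tau-1}p_{ij}$, which matches the first line of~\eqref{eq:SIR_distrib}; crucially it is the \emph{number} of trials, not their absolute position, that enters, so the value is identical for every $k$.

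Next I would handle the boundary. No index outside the window can realize the infimum, so every $\tau\notin\{1,\dots,\rectime[j]\}\cup\{\infty\}$ receives probability zero. The value $\tforinf(k)=\infty$ corresponds to the empty infimum, i.e.\ to all $\rectime[j]$ trials failing, which by independence has probability $(1-p_{ij})^{\rectime[j]}$, the second line of~\eqref{eq:SIR_distrib}. As a consistency check I would verify that the masses sum to one through the finite geometric series $\sum_{\tau=1}^{\rectime[j]}p_{ij}(1-p_{ij})^{\tau-1}=1-(1-p_{ij})^{\rectime[j]}$, the deficit being carried by the atom at infinity.

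The computation itself is elementary; the points that demand care are the off-by-one in pairing the value $\tau$ with the offset $\tau-1$ of the first success, and the correct treatment of the truncation, so that the geometric mass missing beyond $\rectime[j]$ is reassigned as an atom to $\tforinf(k)=\infty$ rather than silently lost. The one modeling premise I would state explicitly before invoking the geometric factorization is the independence of the Bernoulli draws across time, since Assumption~\ref{en:assumption2} alone guarantees identical distribution but not independence.
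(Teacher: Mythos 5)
Your proposal is correct and follows essentially the same route as the paper's proof: identify $\tau_{ij}(k)$ as the first-success time of a truncated sequence of independent $\mathcal{B}(p_{ij})$ draws, factor the event $\{\tau_{ij}(k)=\tau\}$ into failures followed by one success, and assign the residual mass $(1-p_{ij})^{R_j}$ (all trials failing, empty infimum) to the atom at $\infty$. If anything, your version is tighter than the paper's: your pairing of the value $\tau$ with failures at offsets $0,\dots,\tau-2$ and success at offset $\tau-1$ is the indexing actually consistent with the stated law (the paper's proof places the success at offset $\tau$, an off-by-one slip), and your explicit remark that independence of the draws across time is a modeling premise not supplied by Assumption~2 alone makes precise a step the paper leaves implicit.
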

\begin{theorem}\label{thm:eqdistki}
	Under Assumptions~\ref{en:assumption1}-\ref{en:assumption2},
	for any agent~$i$ 
	protected against external infections,
	we have\footnote{
		The symbol $\eqdist$
		denotes equality in distribution.
	}
	\eq\label{eq:inftime}
		\inftime \eqdist \inf_{j\in\neigh[]} \left(
		\inftime[j] + 
		\phi_{\rectime[j]}\left(
			\ceil{\log_{1-p_{ij}} u_{ij}}
		\right)
		\right),
%
	\qe
	where
	\[
		\phi_{\rectime[j]}(\circ) :=
		\begin{cases}
			\circ & \mbox{if } \circ \leq \rectime[j] \\
			\infty & \mbox{otherwise}
		\end{cases},
	\]
	and
	$u_{ij}\sim\pazocal{U}(0,1)$.
	\begin{proof}
		By incorporating~(\ref{eq:howinfection})~into~(\ref{eq:ki}),
		for~$i$ being an agent protected against external infection,
		one has 
		\eqs
			\ki=\inf_{j\in\neigh[]}
			\{
				\tdisc\mid
				\coeff[j][i][k-1]=1,~
				\x[j][k-1]=1
			\}~.
		\qes
		Also by~(\ref{eq:ki}) and definition
		of recovery time,
		the latter is equivalent to
		\eqs
			\ki=\inf_{j\in\neigh[]}
			\{
				\tdisc\mid
				\coeff[j][i][k-1]=1,~
				1\leq k-\ki[j]\leq\rectime[j]+1
			\}~,
		\qes
		that, by considering (\ref{eq:tforinf}), 
		can be rewritten as
		\eqs
			\ki\eqdist
			\inf_{j\in\neigh[]}
			\{
				\ki[j]
				+
				\tforinf(\ki[j])
			\}~.
		\qes
		By~Lemma~\ref{lemma:stat},
		being $\tforinf$ stationary,
		\eq\label{eq:kialmostdone}
			\ki\eqdist
				\inf_{j\in\neigh[]}
			\{
			\ki[j]
			+
			\tforinf
			\}~.
		\qe
		A realization of $\tforinf$, by~\eqref{eq:SIR_distrib},
		is determined by applying a threshold $\phi_{\rectime[j]}(\cdot)$
		on a realization of a random variable, say $\tilde{\tau}_{ij}$, with
		geometric distribution, i.e., $\ttilde\sim\mbox{\normalfont Geo}(p_{ij})$,
		An algorithm for sampling a geometric random variable
		in constant time\footnote{
			Hereafter, we assume that we
			can draw uniform distributed random variables
			in constant time.} is given in literature,
		see, e.g.,~\cite[section~X.2]{devroye1986non}.
		Given a realization of a uniformly distributed random
		variable $u_{ij}\sim\pazocal{U}(0,1)$, a sample of $\ttilde$ is
		is computed exactly using
		formula $\ceil{\log_{1-p_{ij}} u_{ij}}$,
		where $\ceil{\cdot}$ is the ceiling
		function.
		By incorporating 
		\eq\label{eq:howcalculatetauij}
			\tforinf \eqdist \phi_{\rectime[j]}\left(
			\ceil{\log_{1-p_{ij}} u_{ij}}
			\right)
		\qe
		into (\ref{eq:kialmostdone}),
		the proof is concluded.
		%
	\end{proof}
\end{theorem}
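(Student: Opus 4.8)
The plan is to reduce the network-wide recursion defining $\inftime$ to a family of single-edge waiting-time problems, and then use Lemma~\ref{lemma:stat} to pass from a pathwise identity to an identity in distribution. First I would unfold~(\ref{eq:ki}) using the contagion rule~(\ref{eq:howinfection}): since $i$ is protected against external infection, the only mechanism that can set $\x[i][k]=1$ is a neighbor $j\in\neigh[]$ that was infectious at step $k-1$ while the corresponding coefficient fired, i.e.\ $\coeff[j][i][k-1]=1$ and $\x[j][k-1]=1$; the side condition $\y[i][k-1]=0$ is automatic for $k\le\inftime$, because an agent cannot be recovered before it has ever been infected. Taking the infimum over neighbors and over admissible $k$ then gives a pathwise expression for $\inftime$.

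Next I would rewrite the event $\x[j][k-1]=1$ in terms of $j$'s own infection time. The recovery bookkeeping of~(\ref{eq:necRec})--(\ref{eq:si}) shows that $j$ is infectious exactly on the window $\{\inftime[j],\dots,\inftime[j]+\rectime[j]\}$, so $\x[j][k-1]=1$ is equivalent to $1\le k-\inftime[j]\le\rectime[j]+1$. Coupling this with $\coeff[j][i][k-1]=1$, the earliest infection of $i$ transmitted along the edge $\{i,j\}$ is the first successful contagion inside $j$'s window, which by Definition~\ref{def:tforinf} is $\inftime[j]+\tforinf(\inftime[j])$. This yields the pathwise identity $\inftime=\inf_{j\in\neigh[]}\left(\inftime[j]+\tforinf(\inftime[j])\right)$.

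Then I would move to distributions. Lemma~\ref{lemma:stat} tells me that $\{\tforinf(k)\}_k$ is stationary with the law~(\ref{eq:SIR_distrib}), so that, after conditioning on the value of the random time $\inftime[j]$, the edge waiting time $\tforinf(\inftime[j])$ may be replaced by a generic copy $\tforinf$; this gives $\inftime\eqdist\inf_{j\in\neigh[]}\left(\inftime[j]+\tforinf\right)$. Finally I would realize $\tforinf$ explicitly by inverse-transform sampling: for $u_{ij}\sim\pazocal{U}(0,1)$ the variable $\ceil{\log_{1-p_{ij}}u_{ij}}$ is $\mathrm{Geo}(p_{ij})$-distributed, and applying the threshold $\phi_{\rectime[j]}$ sends every outcome exceeding $\rectime[j]$ to $\infty$, thereby reproducing both nontrivial cases of~(\ref{eq:SIR_distrib}). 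Substituting $\tforinf\eqdist\phi_{\rectime[j]}\!\left(\ceil{\log_{1-p_{ij}}u_{ij}}\right)$ into the infimum produces~(\ref{eq:inftime}).

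The hard part will be the passage from the pathwise identity to the distributional one. Stationarity fixes the marginal law of each $\tforinf(\inftime[j])$, but legitimizing the single-copy substitution inside the infimum is delicate, because $\inftime[j]$ is itself random and is generated from contagion coefficients on other edges and at earlier times; one has to argue that, conditionally on $\{\inftime[j]\}_{j\in\neigh[]}$, the edge waiting times are independent of the history that produced those infection times and mutually independent across neighbors, so that replacing them by fresh i.i.d.\ samples $u_{ij}$ leaves the joint law of the right-hand side unchanged. By comparison, the remaining work — the boundary bookkeeping of the infectious window $\{\inftime[j],\dots,\inftime[j]+\rectime[j]\}$ and the censoring at $\rectime[j]$ — is routine once this independence structure is pinned down.
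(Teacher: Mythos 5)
Your proposal follows essentially the same route as the paper's proof: unfolding~(\ref{eq:ki}) via~(\ref{eq:howinfection}), rewriting the neighbor's infectious window as $1\leq k-\inftime[j]\leq\rectime[j]+1$ to obtain $\inftime=\inf_{j\in\neigh[]}\left(\inftime[j]+\tforinf(\inftime[j])\right)$, invoking Lemma~\ref{lemma:stat} to replace $\tforinf(\inftime[j])$ by a stationary copy $\tforinf$, and then realizing $\tforinf$ by inverse-transform sampling composed with the threshold $\phi_{\rectime[j]}$. The independence issue you flag as the ``hard part'' is a legitimate subtlety, but the paper's own proof passes over it in exactly the same way, justifying the substitution by stationarity alone, so your treatment is no less complete than the published one.
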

\begin{corollary}
	Under Assumptions~\ref{en:assumption1}-\ref{en:assumption2},
	for any agent~$i$ 
	protected against external infections,
	we have	
	\eq\label{eq:exinftime}
		\inftime \eqdist \inf \left(
		\inf_{j\in\neigh[]} \left(
		\inftime[j] + 
		\phi_{\rectime[j]}\left(
		\ceil{\log_{1-p_{ij}} u_{ij}}
		\right)
		\right),~
		\tforex \right),
	\qe
	for 
	$u_{ij}\sim\pazocal{U}(0,1)$.
	\begin{proof}
		Consider~(\ref{eq:kialmostdone}).
		In case of an agent subject to external infection,
		we could rewrite it as
		\eq\label{eq:kialmostdone_open}
			\ki\eqdist
			\inf
			\left\{
			\inf_{j\in\neigh[]}
			\{
			\ki[j]
			+
			\tforinf
			\},~\tforex
			\right\}~.
		\qe
		By incorporating~(\ref{eq:howcalculatetauij})
		into the latter, the proof immediately follows.
	\end{proof}
\end{corollary}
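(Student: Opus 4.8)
The plan is to build directly on the intermediate identity established inside the proof of Theorem~\ref{thm:eqdistki}, augmenting the intra-network contagion channel with the external one. Recall that for a protected agent the theorem derived~\eqref{eq:kialmostdone}, which expresses $\ki$ in distribution as $\inf_{j\in\neigh[]}\{\ki[j]+\tforinf\}$. When agent~$i$ is instead subject to external infection, I would return to the definition~\eqref{eq:ki} of $\ki$ and note that now \emph{two} mechanisms can trigger the first infection of~$i$: the intra-network contagion of~\eqref{eq:howinfection} and the external contagion of~\eqref{eq:extinf}.

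First I would incorporate both~\eqref{eq:howinfection} and~\eqref{eq:extinf} into~\eqref{eq:ki}. This makes the earliest infection time of agent~$i$ the minimum between the earliest \emph{intra-network} infection time---precisely the quantity $\inf_{j\in\neigh[]}\{\ki[j]+\tforinf\}$ already analyzed in the theorem---and the earliest \emph{external} infection time. By Definition~\ref{def:tauOpenInf} the latter equals $\tforex$, a deterministic quantity fixed by the exogenous inputs $\{\extinf\}_{k\in\natnum}$. Taking the outer infimum of these two contributions yields~\eqref{eq:kialmostdone_open}.

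Next I would invoke Lemma~\ref{lemma:stat} together with the sampling identity~\eqref{eq:howcalculatetauij}, substituting $\tforinf\eqdist\phi_{\rectime[j]}\!\left(\ceil{\log_{1-p_{ij}}u_{ij}}\right)$ with $u_{ij}\sim\pazocal{U}(0,1)$ into~\eqref{eq:kialmostdone_open}. Since $\tforex$ carries no randomness, it passes unchanged through the equality in distribution, and one reads off~\eqref{eq:exinftime}, completing the argument.

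The main obstacle is a modeling subtlety rather than a computation: I must argue that adjoining the external channel merely enlarges the index set of the infimum in~\eqref{eq:ki} without perturbing the intra-network dynamics already characterized in the theorem. Concretely, the external inputs $\{\extinf\}$ and the contagion coefficients $\{\coeff\}$ must be treated as independent sources of the first-infection event, so that the distributional replacement~\eqref{eq:howcalculatetauij} can be applied inside the inner infimum while $\tforex$ is held fixed, and the overall minimum collapses exactly to~\eqref{eq:exinftime}.
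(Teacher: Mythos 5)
Your proposal is correct and follows essentially the same route as the paper: extend the intermediate identity~\eqref{eq:kialmostdone} from the proof of Theorem~\ref{thm:eqdistki} by adjoining the external channel $\tforex$ as an extra term in the infimum, obtaining~\eqref{eq:kialmostdone_open}, and then substitute the sampling identity~\eqref{eq:howcalculatetauij}. The only difference is that you spell out the justification (the two infection mechanisms~\eqref{eq:howinfection} and~\eqref{eq:extinf}, and the determinism of $\tforex$) that the paper's terse proof leaves implicit.
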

We have gathered all notions needed for
defining the \textit{Contagion Graph}
associated to network $(\nodes,\arcs)$
in which an infection is spreading across
nodes with dynamics modeled by~(\ref{eq:system}).
\begin{definition}
	\label{def:CG}
	The \textbf{Contagion Graph} of network $(\nodes, \arcs)$
	with infection dynamics~(\ref{eq:system})
	is a directed graph with random weights
	whose nodes are $\nodes\cup\exnodes$.
	The set of arcs is $\arcs_1\cup\arcs_2$, where
	$$
		\arcs_1:=\{(j,i)\mid\{i,j\}\in\arcs,~ \tforinf\in\natnum \}
	$$
	and
	$$
		\arcs_2:=\{
			(-i,i)
			\mid
			i\in\exnodesnotlab
		\}.
	$$
	Arc weights are,
	$\forall (j,i)\in\arcs_1$, $\tforinf$, 
	and,
	$\forall (-i,i)\in\arcs_2$,
	$\tforex$.
	Weights $\tforinf$
	are random variables
	with distributions as in~(\ref{eq:howcalculatetauij}).
\end{definition}
\begin{example} \label{ex:1} 
\begin{figure}
\begin{subfigure}{.35\linewidth}
    \centering
    \resizebox{1\linewidth}{!}{
\begin{tikzpicture}[
            > = stealth, 
            shorten > = 1pt, 
            auto,
            node distance = 2cm, 
            semithick 
        ]
\tikzstyle{every state}=[
            draw = black,
            thick,
            fill = white,
            minimum size = 7mm
        ]
\newcommand{\sizedot}{2};

\node[state] (1) {$1$};
\node[state] (2) [right of=1] {$2$};
\node[state] (3) [right of=2] {$3$};
\node[state] (4) [below of=3] {$4$};
\node[state] (5) [left of=4] {$5$};

\path (2) edge (1);
\path (2) edge (3);
\path (3) edge (4);
\path (5) edge (4);
\path (5) edge (2);

\end{tikzpicture}
    }
    \caption{Graph topology.}
    \label{fig:ex_topology}
\end{subfigure}
\begin{subfigure}{.6\linewidth}
    \centering
    \resizebox{1\linewidth}{!}{
\begin{tikzpicture}[
            > = stealth, 
            shorten > = 1pt, 
            auto,
            node distance = 2cm, 
            semithick 
        ]
\tikzstyle{every state}=[
            draw = black,
            thick,
            fill = white,
            minimum size = 7mm
        ]
\newcommand{\sizedot}{2};

\node[state] (1) {$1$};
\node[state] (2) [right of=1] {$2$}; 
\node[state] (3) [right of=2] {$3$};
\node[state] (4) [below of=3] {$4$};
\node[state] (5) [left of=4] {$5$}; 

\node[state,fill=red!60] (11) [left of=1] {$\exi[1]$};
\node[state,fill=red!60] (44) [right of=4] {$\exi[4]$};

\path[->] (2) edge[bend left] node {$\tforinf[1][2]$} (1);
\path[->] (1) edge[bend left] node {$\tforinf[2][1]$} (2);
\path[->] (2) edge[bend left] node {$\tforinf[3][2]$} (3);
\path[->] (3) edge[bend left] node {$\tforinf[2][3]$} (2);
\path[->] (3) edge[bend left] node {$\tforinf[4][3]$} (4);
\path[->] (4) edge[bend left] node {$\tforinf[3][4]$} (3);
\path[->] (4) edge[bend left] node {$\tforinf[5][4]$} (5);
\path[->] (5) edge[bend left] node {$\tforinf[4][5]$} (4);
\path[->] (5) edge[bend left] node {$\tforinf[2][5]$} (2);
\path[->] (2) edge[bend left] node {$\tforinf[5][2]$} (5);

\path[->] (11) edge node {1} (1);
\path[->] (44) edge node {3} (4);

\end{tikzpicture}
    }
    \caption{Contagion Graph.}
    \label{fig:ex_contagion}
\end{subfigure}

\vskip10pt

\begin{subfigure}{1\linewidth}
    \centering
    \resizebox{!}{.1\textheight}{
\begin{tikzpicture}[
            > = stealth, 
            shorten > = 1pt, 
            auto,
            node distance = 2cm, 
            semithick 
        ]
\tikzstyle{every state}=[
            draw = black,
            thick,
            fill = white,
            minimum size = 7mm
        ]
\newcommand{\sizedot}{2};

\node[state] (1) {$1$};
\node[state] (2) [right of=1] {$2$};
\node[state] (3) [right of=2] {$3$};
\node[state] (4) [below of=3] {$4$};
\node[state] (5) [left of=4] {$5$};

\node[state,fill=red!60] (11) [left of=1] {$\exi[1]$};
\node[state,fill=red!60] (44) [right of=4] {$\exi[4]$};

\path[->] (1) edge[bend left] node {3} (2);
\path[->] (2) edge[bend left] node {2} (1);
\path[->] (2) edge node {3} (3);
\path[->] (3) edge node {2} (4);
\path[->] (4) edge[bend left] node {1} (5);
\path[->] (5) edge[bend left] node {2} (4);
\path[->] (5) edge node {3} (2);

\path[->] (11) edge node {1} (1);
\path[->] (44) edge node {3} (4);

\end{tikzpicture}
    }
    \caption{A possible Contagion Graph realization.}
    \label{fig:ex_contagion_real}
\end{subfigure}
\caption{Graphs of Example \ref{ex:1}.}
\end{figure}
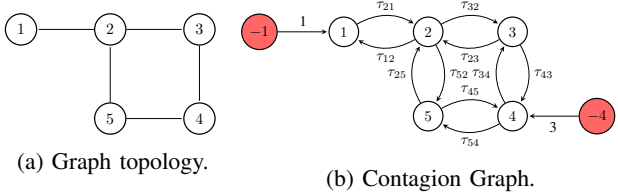
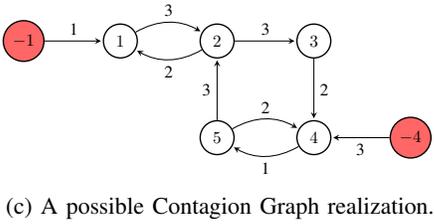

Consider a network of $5$ agents
with infection dynamics~(\ref{eq:system})
as in \emph{Fig.~\ref{fig:ex_topology}}.
Assume that agents $1$ and $4$ are subject to external infection.
The recovery time is supposed to 
be $3$ time steps for each individual, i.e.,
$\forall i\in\nodes$, $\rectime = 3$, and the time
of infection from external sources are $\tforex[1]=1$,
$\tforex[4]=3$.
From these parameters, it is possible to draw the 
Contagion Graph
in \emph{Fig.~\ref{fig:ex_contagion}}, where nodes of set
$\nodes=\{1,2,3,4,5\}$ are in white and the ones of set
$\exnodes=\{ \exi[1],\exi[4]\}$ are in red; note that weights
$\tforinf$ are random variables, while $\tforex$ are
deterministic.
Assuming that, $\forall \{i,j\}\in\arcs$,
$p_{ij}=0.2$,
a realization for random variables $\tforinf$
is computed as in (\ref{eq:howcalculatetauij}).
The corresponding realization of the Contagion Graph is
shown in \emph{Fig.~\ref{fig:ex_contagion_real}}.
Here, the arcs that are missing from the Contagion Graph
correspond to a realization $\tforinf=\infty$.
\end{example}
The Contagion Graph provides an efficient
method to compute all $\{\ki \}_{\aginset}$,
as explained by the following result.
%
%
\begin{theorem}\label{thm:minPath}
	For each agent $\aginset$,
	the random variable $\ki$
	is equal in distribution to the 
	\textit{minimum weighted path} on
	the Contagion Graph from any node in $\exnodes$
	to~$i$. 
	\begin{proof}
		Consider~(\ref{eq:kialmostdone_open}).
		This can be expanded as, $\forall\aginset$,
		\begin{align*}
			\ki\eqdist&
			\inf
			\left\{
			\inf_{\substack{j\in\neigh[]\\\ell\in\neigh[][j]}}
			\left\{
			\ki[\ell]
			+
			\tforinf[j][\ell]+\tforinf
			,~\tforex[j]+\tforinf\right\},~\tforex
			\right\}\\
			\eqdist&
			\inf_{\substack{j\in\neigh[]\\\ell\in\neigh[][j]}}
			\left\{
			\ki[\ell]
			+
			\tforinf[j][\ell]+\tforinf
			,~\tforex[j]+\tforinf,
			~\tforex
			\right\}~.
		\end{align*}
		One can see that $\ki$ is equal in distribution
		to the minimum 
		between
		the path from $-i\in\exnodes$ to $\aginset$, 
		the path from $-j\in\exnodes$ to $\aginset$,
		and  $k_\ell$ plus the path from $\ell\in\nodes$ to $\aginset$.		
		By doing this recursively,
		one obtains
		\eq\label{eq:pathminweight}
			\inftime \eqdist \inf_{\substack{\{h_1,\dots,h_\ell\}\subseteq\nodes\\h_\ell=i}}
			\left\{ \tforex[{h_1}] +
			\sum_{m=2}^{\ell}
			\tforinf[h_{m}][h_{m-1}]\right\}~,
		\qe
		which is, by Definition~\ref{def:CG},
		the minimum path going from one
		node $-h_1\in\exnodes$ to $\aginset$.
	\end{proof}
\end{theorem}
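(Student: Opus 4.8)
The plan is to start from the recursive distributional identity of the preceding Corollary, namely~(\ref{eq:kialmostdone_open}), and to unfold it until $\inftime$ is written as an infimum over all infection chains reaching agent~$i$. First I would recall that, for an agent possibly subject to external infection,
\[
    \inftime \eqdist \inf\left\{ \inf_{j\in\neigh[]}\{\ki[j] + \tforinf\},~\tforex \right\},
\]
and then substitute this same identity for each neighbour's infection time $\ki[j]$. After one such substitution the $\ki[j]$ term resolves into contributions $\ki[\ell] + \tforinf[j][\ell] + \tforinf$ for $\ell\in\neigh[][j]$, a term $\tforex[j] + \tforinf$, and the isolated $\tforex$. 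Reading off Definition~\ref{def:CG}, each of these is the weight of a directed path of length one or two ending at~$i$ in the Contagion Graph.

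Next I would iterate the substitution and argue, by induction on the recursion depth, that after $\ell$ stages the infimum ranges over all directed walks $-h_1, h_1, \dots, h_\ell = i$ in the Contagion Graph, each carrying accumulated weight $\tforex[{h_1}] + \sum_{m=2}^{\ell} \tforinf[h_m][h_{m-1}]$. This is precisely~(\ref{eq:pathminweight}), which by Definition~\ref{def:CG} is the weight of a path from a node $-h_1\in\exnodes$ to~$i$; the bookkeeping is immediate, since each arc of $\arcs_1$ contributes a weight $\tforinf$ and the opening arc $(-h_1,h_1)\in\arcs_2$ contributes $\tforex[{h_1}]$. Taking the infimum over all such walks yields the minimum weighted path, as claimed.

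The hard part will be justifying that the distributional equality survives the unbounded recursion. Two points need care. Because the interaction graph is undirected, expanding $\ki[j]$ reintroduces $\inftime$ through the cyclic term $\inftime + \tforinf[j][i] + \tforinf$; but every finite weight $\tforinf$ is at least~$1$, so any walk traversing a cycle is strictly heavier than the simple path obtained by excising that cycle, and cyclic contributions therefore never attain the infimum, so the recursion may be confined to the finitely many simple paths. The subtler point is that the chain of $\eqdist$ steps composes into a single distributional statement only because the weights $\tforinf$ on distinct arcs are mutually independent, being functions of the contagion coefficients $\coeff$ on distinct edges; this independence, together with the stationarity from Lemma~\ref{lemma:stat}, is exactly what permits replacing each time-shifted $\tforinf(\ki[j])$ by an independent copy $\tforinf$ at every level of the expansion. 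Finally, realizations with $\tforinf=\infty$ are handled trivially: the corresponding arc is absent from $\arcs_1$, equivalently it contributes $+\infty$ to the infimum and is automatically discarded.
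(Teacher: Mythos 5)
Your proposal follows essentially the same route as the paper's proof: it unfolds the recursive identity~\eqref{eq:kialmostdone_open}, identifies each accumulated weight with a directed path ending at~$i$ via Definition~\ref{def:CG}, and arrives at exactly~\eqref{eq:pathminweight}. The only difference is that you are more explicit than the paper about the points it passes over silently (discarding cyclic terms because finite arc weights are at least~$1$, the independence and stationarity needed to compose the distributional equalities across recursion levels, and the treatment of infinite realizations), which refines rather than changes the argument.
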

By this latter Theorem,
in order to simulate the epidemics'~dynamics~(\ref{eq:system})
over a network~$(\nodes,\arcs)$, 
it is sufficient
to obtain a realization of the Contagion Graph and
compute~(\ref{eq:pathminweight}) for each node.
As it will be
shown in Section~\ref{sec:compCompl}, this allows to decrease 
the computational complexity if compared to 
running~(\ref{eq:system}).
\newline

\begin{corollary}\label{coro:wave}
	For a realization of the Contagion Graph,
	we can compute the number of infected agents
	at every time~$\tdisc$ as
	\eqs
		\sum_{i\in\nodes} x_i(k)
		=
		\sum_{i\in\nodes}
		\varrho(k-\inftime+1)
		-
		\varrho(k-\inftime-\rectime).
	\qes
	%
	\begin{proof}
		{The proof is omitted due to space limitation}.
	\end{proof}
\end{corollary}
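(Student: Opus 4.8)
The plan is to reduce everything to a single per-agent identity and then sum. Concretely, I would prove that, for a fixed realization of the Contagion Graph,
\[
	\x = \varrho(k-\inftime+1) - \varrho(k-\inftime-\rectime), \qquad \forall\aginset,~\forall\tdisc,
\]
and then sum over $i\in\nodes$; since the claimed formula is exactly $\sum_{i\in\nodes}$ of this pointwise identity, the corollary follows immediately. Working with a realization is what makes $\inftime$ a determined value in $\natnum\cup\{\infty\}$: by Theorem~\ref{thm:minPath} it equals the minimum weighted path from $\exnodes$ to~$i$, so each summand is well defined.

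First I would show that the right-hand side is the indicator of the infection window $\{\inftime,\dots,\inftime+\rectime\}$. Since here $\varrho$ is evaluated at integers and jumps at $0$, a short case analysis gives $\varrho(k-\inftime+1)=1$ iff $k\geq\inftime$, and $\varrho(k-\inftime-\rectime)=1$ iff $k\geq\inftime+\rectime+1$. Their difference is therefore $1$ exactly when $\inftime\leq k\leq\inftime+\rectime$ and $0$ otherwise; when $\inftime=\infty$ the window is empty and the difference vanishes identically, matching an agent that is never infected.

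Next I would identify $\x$ with the same indicator through the stopwatch. The second line of~\eqref{eq:system} reads $\s[i][k+1]=\s+\x$, so $\x=\s[i][k+1]-\s$ is precisely the increment of the non-decreasing stopwatch. Substituting the closed form~\eqref{eq:si} and differencing consecutive values, the increment is $0$ while $\s$ is pinned at $0$ (for $k<\inftime$) or at $\rectime+1$ (for $k>\inftime+\rectime$), and equals $1$ both on the linear stretch $\s=k-\inftime$ and at the two boundary transitions $k=\inftime$ and $k=\inftime+\rectime$. This uses $\rectime\in\posint$, so that the middle branch of~\eqref{eq:si} is nonempty, and it implicitly relies on permanent immunity (\eqref{eq:yincrease}, \eqref{eq:yimpliesx}), already guaranteed by the system, to exclude any re-infection that would break the contiguity of the window.

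Combining the two indicator identities yields the pointwise equality $\x=\varrho(k-\inftime+1)-\varrho(k-\inftime-\rectime)$, and summing over $\aginset$ completes the argument. I expect the only delicate point to be the endpoint bookkeeping: checking that $\x=1$ at \emph{both} $k=\inftime$ and $k=\inftime+\rectime$ (not merely strictly inside the window) and that the strict/non-strict conventions of $\varrho$ at argument $0$ align exactly with the stopwatch transitions. Handling the case $\inftime=\infty$ uniformly is a minor additional verification rather than a genuine obstacle.
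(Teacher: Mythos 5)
Your proof is correct, but there is nothing in the paper to compare it against: the authors omit the proof of this corollary ``due to space limitation.'' Your argument is the natural one that fills that gap. The key steps all check out: for integer arguments, $\varrho(k-k_i+1)=1$ iff $k\geq k_i$ and $\varrho(k-k_i-R_i)=1$ iff $k\geq k_i+R_i+1$, so the difference is the indicator of the window $\{k_i,\dots,k_i+R_i\}$ (and vanishes identically when $k_i=\infty$); meanwhile, the second line of~(\ref{eq:system}) gives $x_i(k)=s_i(k+1)-s_i(k)$, and differencing the closed form~(\ref{eq:si}) — whose consistency with the dynamics is exactly what Proposition~1 asserts — shows this increment is $1$ precisely on that same window, including both endpoints $k=k_i$ and $k=k_i+R_i$. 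One presentational remark: your appeal to~(\ref{eq:yimpliesx}) and~(\ref{eq:yincrease}) to rule out re-infection is not really a separate ingredient, since the closed form~(\ref{eq:si}) (pinned at $R_i+1$ for all $k\geq k_i+R_i+1$) already encodes permanent immunity; citing Proposition~1 as the license to use~(\ref{eq:si}) is sufficient and slightly cleaner.
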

\begin{example}[Continuation of Example~\ref{ex:1}]
	\label{ex:2}
	Given the realization of \emph{Fig.~\ref{fig:ex_contagion_real}},
	we can determine the evolution of the disease spread by
	computing the paths of minimum weight from any node of
	$\exnodes$ to any node of $\nodes$.
	By solving the single-source shortest path problem from
	nodes $\exi[1]$ and $\exi[4]$, and then taking the path
	of minimum weight between the two, we get $
	\inftime[1]=1,\ 
	\inftime[2]=4,\ 
	\inftime[3]=7,\ 
	\inftime[4]=3,\ 
	\inftime[5]=4.$
\end{example}
\subsection{Computational complexity}\label{sec:compCompl}
%
%
%
\subsubsection{Dynamics (\ref{eq:system})}
in order to compute the disease evolution
using~\eqref{eq:system}, we need $n(n+4)$ operations
for every time step.
Note that the number of time steps needed to capture the entire
evolution depends on the size of parameters $p_{ij}$, $\rectime$ 
and $\tforex$;
hence, with this approach, which is the same described in~\cite{koher2016infections}, we can compute $\inftime$ for all 
nodes only in weakly polynomial time
complexity of $\pazocal{O}(n^2 T)$,
where $T\in\posint$ is the simulation horizon.
\subsubsection{Contagion Graph}
Obtaining a realization
of the Contagion Graph requires at most
a computational complexity equal to $\pazocal{O}(n^2)$. 
In fact, drawing the Contagion Graph
requires to compute (\ref{eq:howcalculatetauij})
for all elements of set $\nodes\times\nodes$.
\newline
By Theorem~\ref{thm:minPath}, for each agent $\aginset$,
$\ki$ can be computed by solving a single-source
shortest path problem from every node in $\exnodes$ to every node in $\nodes$.
%
Since the weights of every Contagion Graph realization
are non-negative, this can be done by applying Dijkstra's
algorithm $|\exnodes|$ times.
Note that the complexity of one run of Dijkstra's algorithm
equals $\pazocal{O}(n^2)$.
Hence, the overall complexity of employing
the Contagion Graph for computing $\{\ki\}_{\aginset}$
is
$$\pazocal{O}(n^2+n^2|\exnodes|)=\pazocal{O}(n^2|\exnodes|),$$
thus strongly polynomial in time, see, e.g.,~\cite{cormen2009introduction}.

\begin{remark}
	Note that standard models usually consider $|\exnodes|=1$, which corresponds to considering one \emph{patient zero}.
    Given that, in general, $T\gg|\exnodes|$,
    using the Contagion Graph
    greatly improves simulation performance.
\end{remark}

\begin{example}[Continues from Example~\ref{ex:time-based}]
	Running the same simulation as \emph{Fig.~\ref{fig:epidemics_pij_02_Rimax_5}}
	with the Contagion Graph
	takes $0.05\mathrm{s}$ on the same machine.
\end{example}
\subsection{Contagion Graph as approximated model}\label{sec:contGraphApprox}
The Contagion Graph can be also a computationally
inexpensive method to assess the
``mean behavior'' of an epidemics
over a network.
We formulate the approach, 
described also in~\cite[section 2.1]{loui1983optimal}, 
only in the case of static networks; 
future work will extend the same strategy to 
the time-varying case.

Consider an arc, say $(j,i)$, of the Contagion Graph.
Its weight is a random variable (moreover, in case
its realization is infinite, the arc is dropped).
We aim at estimating the value of this arc weight
by fitting some parameters.
In fact, given a parameter $\beta\in[0,1]$, 
the estimated arc weight $\estforinf\in\posint\cup\{\infty\}$ 
is the minimum value greater than~$\tforinf$ with 
probability higher than~$\beta$. Formally, $\forall(i,j)\in\arcs_1$,
\eq\label{eq:estforinf}
	\estforinf := \inf \{\tau \in\posint \ |\  \probab{\tau\geq\tforinf} \geq \beta\}.
\qe
Different choices of $\beta$ correspond to different approximations of $\tforinf$.
By (\ref{eq:SIR_distrib}) and (\ref{eq:estforinf}),
%
one can obtain an explicit formula for $\estforinf$, i.e., 
\footnote{
	Relaxing the condition of discrete $\estforinf$, the ceiling function can be removed.}:
\[
	\estforinf = \phi_{\rectime[j]}\left( \ceil{\log_{1-p_{ij}}(1-\beta)} \right).
\]
We build a Contagion Graph with arcs having
weights equal to $\estforinf$ (note that if 
the weight is infinite, the corresponding arc is dropped).
Thus, we can obtain an 
estimation of set $\{\ki \}_{\aginset}$ (depending
on $\beta$), by solving the shortest path problem, as in Section~\ref{su:contagion_equivalent}.
Clearly, these estimates are computed in the same time complexity of a single simulation.
\begin{figure}
	\centering
	\includegraphics[width=.6\columnwidth]{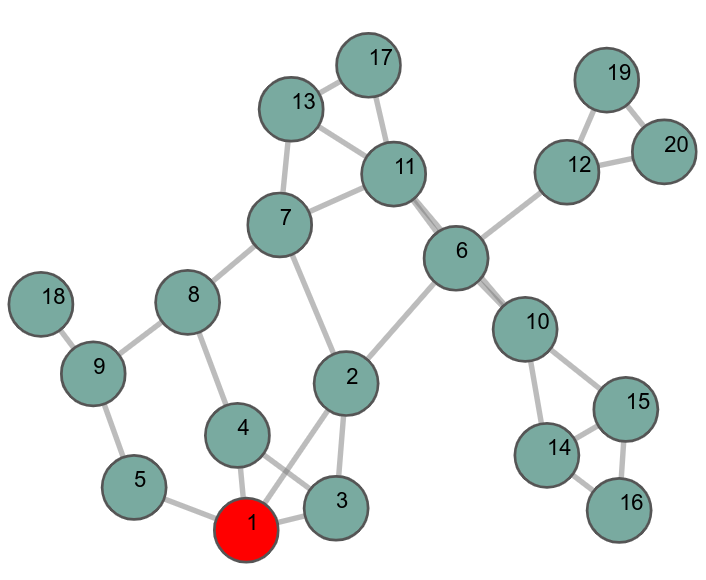}
	\caption{Network topology, with agent~$1$ in red.}
	\label{fig:network_CG}
\end{figure}
\begin{example}\label{ex:CGavg}
	Consider the network in \emph{Fig.~\ref{fig:network_CG}},
	in which $u_1(0)=1$. 
	We consider two different scenarios for the problem:
	(i) $p_{ij}=0.8$ and $\rectime\in[3,20]$,
	(ii) $p_{ij}=0.2$ and $\rectime\in[3,20]$.
	For each scenario, 
	we run $400$ \textit{Montecarlo} simulations
	employing dynamics~(\ref{eq:system}) 
	and one realization of the Contagion Graph following
	the idea in Section~\ref{sec:contGraphApprox},
	with
	$\beta=0.5$.
	We compare results of the Contagion Graph
	with results from the discrete-time simulation, thus showing
	that the Contagion Graph could be used as an approximated model.
	In fact, in \emph{Fig.~\ref{fig:montecarlo_pij08_Ri20}} (scenario (i)),
	one can see that the Contagion Graph represents the
	``mean behavior'' of the randomized simulations.
	By decreasing the infectivity (scenario (ii)), many more nodes will happen
	to be non-infected through simulations. This is the case of 	\emph{Fig.~\ref{fig:montecarlo_pij02_Ri20}}, in which 
	the Contagion Graph captures the fact that some nodes are not expected to get
	infected, and these correspond to the nodes that 
	are found to be non-infected most times in the Montecarlo simulations.
\end{example}

\begin{figure}[h]
	\centering
	\includegraphics[width=\columnwidth]{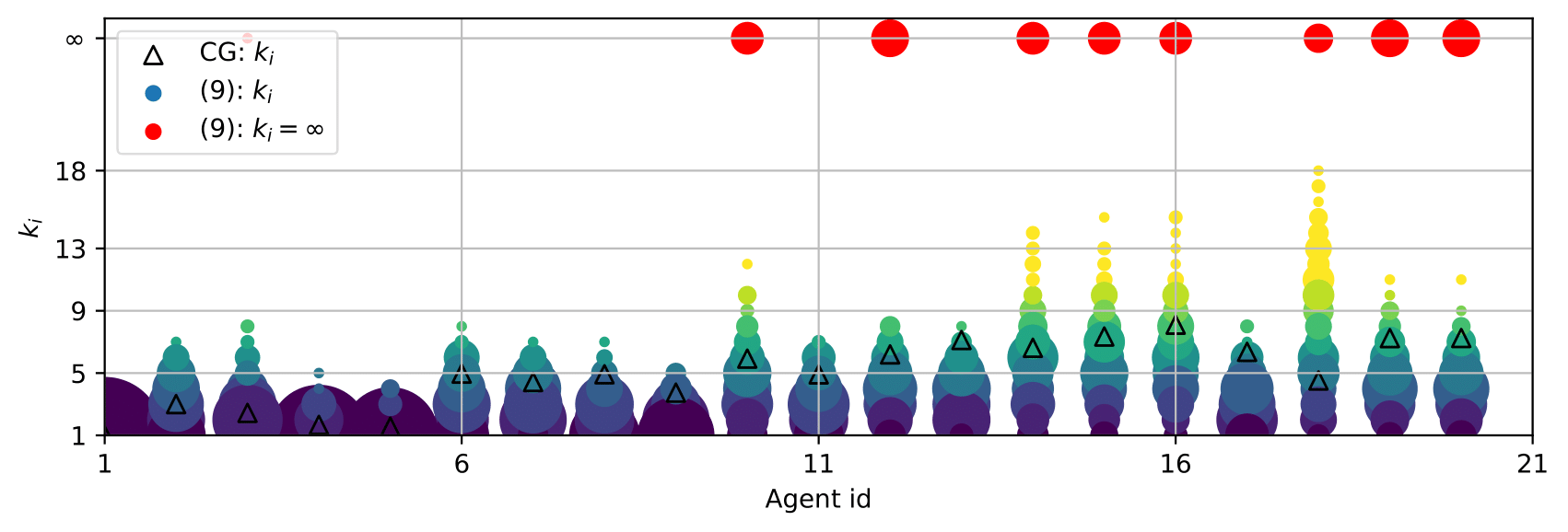}
	\caption{Simulation results for Example~\ref{ex:CGavg}-(i).
	The diameter of each circle is proportional to
	how many times $k_i$ is found with that value in 
	the Montecarlo simulations of dynamics~(\ref{eq:system}).
	Few agents are found to be non-infected (in few simulations).
	The results provided by the Contagion Graph 
	(black triangles) show that this approach provides
	a reliable estimation of the ``mean behavior''.
}
	\label{fig:montecarlo_pij08_Ri20}
\end{figure}
\begin{figure}[h!]
	\centering
	\includegraphics[width=\columnwidth]{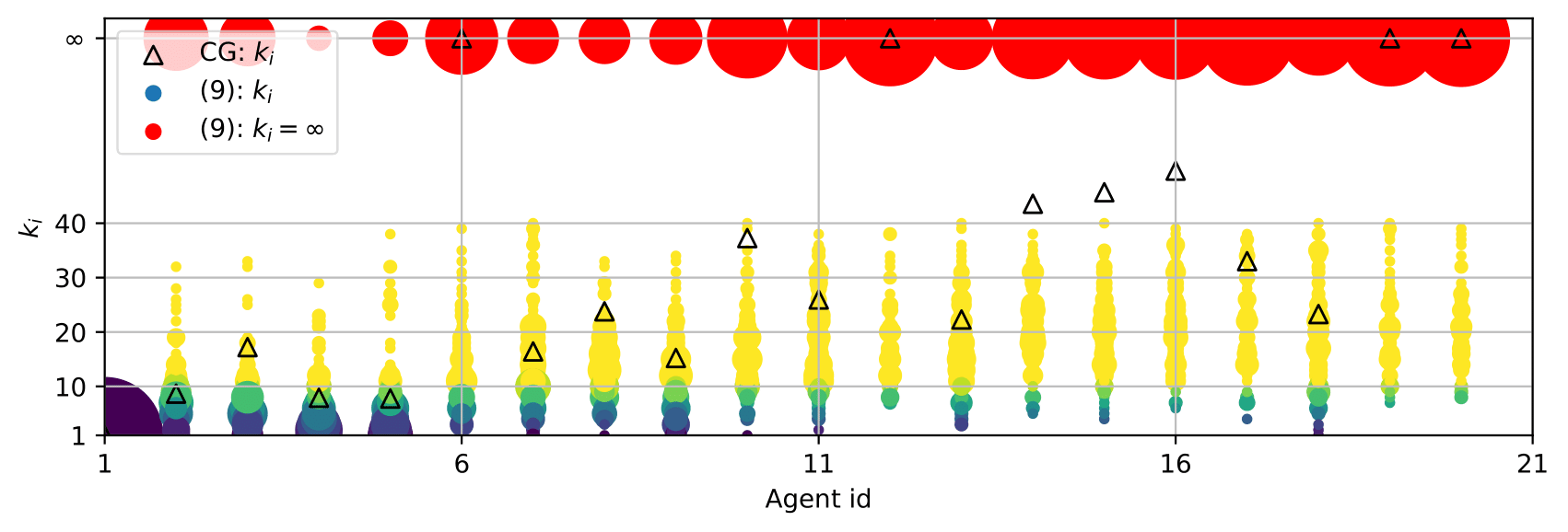}
	\caption{Simulation results for Example~\ref{ex:CGavg}-(ii).
	The decrease in \textit{infectivity}
	results in many agents not being infected through
	the Montecarlo simulations (see the red circles with larger diameter than Example~\ref{ex:CGavg}-(i)).
	This phenomenon is captured by the Contagion Graph.
}
	\label{fig:montecarlo_pij02_Ri20}
\end{figure}

%
%
		
	\section{Conclusion and Future Work} \label{se:conclusion}
This paper has considered an agent-based
model of epidemics over complex networks
and has provided a novel method
for simulating epidemics with lower computational
complexity. The proposed method is based on a graph-based formalization of the problem
and can be also employed for estimating the mean behavior of the epidemic.

Future work will aim at extending the present work to
the case of time-varying contact networks, and at developing
fast control actions based on the \textit{Contagion Graph}
for containing epidemics on large-scale networks.
By using available open data, we aim to employ our method to  investigate COVID-19 scenarios.

	\bibliographystyle{plain} 
	\bibliography{literature}           
\end{document}